\documentclass[sigconf]{acmart}

\usepackage{times}
\usepackage{soul}
\usepackage{url}
\usepackage{graphicx}
\usepackage{amsmath}
\usepackage{booktabs}
\usepackage{amssymb}
\usepackage{amsthm}
\usepackage{mathrsfs}
\usepackage{subfig}
\usepackage{adjustbox} 
\usepackage[utf8]{inputenc}
\usepackage{multirow}
\usepackage{algorithm}
\usepackage{algpseudocode}
\usepackage{xspace}
\usepackage{amsfonts}
\usepackage{float}
\usepackage{enumitem}
\usepackage[most]{tcolorbox}
\usepackage{lipsum}
\usepackage{float}
\usepackage{makecell}
\newtheorem{theorem}{\bf Theorem}
\newcommand{\attack}{\textit{ActInv}\xspace}
\newcommand{\defense}{\textit{PriPert}\xspace}
\usepackage[dvipsnames, svgnames, x11names]{xcolor}

\newcommand{\result}[2]{%
  #1\textnormal{\smash{$_{\pm #2}$}}%
}

\usepackage{tikz}
\usepackage{amsmath}
\usepackage{xcolor}
\newcommand{\revise}[1]{\textcolor{black}{#1}}

\AtBeginDocument{%
  }

\setcopyright{acmlicensed}
\copyrightyear{2018}
\acmYear{2018}
\acmDOI{XXXXXXX.XXXXXXX}
\acmConference[Conference acronym 'XX]{Make sure to enter the correct
  conference title from your rights confirmation email}{June 03--05, 2018}{Woodstock, NY}
\acmISBN{978-1-4503-XXXX-X/2018/06}

\begin{document}

\title{What Does the Server See? Understanding Privacy Leakage from Large Language Models in Split Inference}

\author{Mingyuan Fan}
\affiliation{%
  \institution{East China Normal University}
  \city{Shanghai}
  \country{China}}
\email{fmy2660966@gmail.com}

\author{Yu Liu}
\affiliation{%
  \institution{East China Normal University}
  \city{Shanghai}
  \country{China}}
\email{yuliu@stu.ecnu.edu.cn}

\author{Fuyi Wang}
\affiliation{%
  \institution{RMIT University}
  \city{Melbourne}
  \country{Australia}}
\email{fuyi.wang@rmit.edu.au}

\author{Cen	Chen}
\authornote{Corresponding author.}
\affiliation{%
  \institution{East China Normal University}
  \city{Shanghai}
  \country{China}}
\email{cenchen@dase.ecnu.edu.cn}

\begin{abstract}
    The deployment of large language models (LLMs) on resource-constrained devices remains challenging, spurring interest in split inference, where models are partitioned between client and server to reduce computational burden and enhance privacy by transmitting only intermediate activations.
    However, the privacy-preserving capabilities of split inference, particularly in the context of LLMs, have not been exhaustively investigated.
    To fill this gap, we introduce \attack, which solves an intermediate activation matching problem to reconstruct the client's input.
    Extensive evaluations demonstrate that \attack achieves high-fidelity reconstructions, even in the presence of common perturbation-based defenses such as Gaussian noise injection and activation sparsification.
    
    To systematically understand this vulnerability, we develop Perturbation Amplification Factor (PAF), a metric for quantifying a layer's inherent resistance to reconstruction.
    Our analysis reveals that privacy vulnerability is not uniform across layers, with some layers being highly susceptible to leakage while others offer natural resistance.
    Furthermore, we demonstrate that defense effectiveness can be significantly improved by calibrating perturbation directions to maximize reconstruction error during backpropagation.
    Building on these insights, we design \defense and conduct comprehensive evaluations, covering privacy, utility, and computational overhead, to demonstrate its effectiveness.
\end{abstract}

\begin{CCSXML}
<ccs2012>
<concept>
<concept_id>10002978</concept_id>
<concept_desc>Security and privacy</concept_desc>
<concept_significance>500</concept_significance>
</concept>
<concept>
<concept_id>10010147.10010257</concept_id>
<concept_desc>Computing methodologies~Machine learning</concept_desc>
<concept_significance>500</concept_significance>
</concept>
</ccs2012>
\end{CCSXML}

\ccsdesc[500]{Security and privacy}
\ccsdesc[500]{Computing methodologies~Machine learning}

\keywords{Large Language Models, Privacy, Split Inference}

\received{20 February 2007}
\received[revised]{12 March 2009}
\received[accepted]{5 June 2009}

\maketitle

\section{Introduction}
\label{sec1}
The rapid proliferation of large language models (LLMs) has revolutionized numerous applications~\cite{qwen3,gpt4,llama3}, yet their deployment on resource-constrained devices, such as smartphones and embedded systems, remains a significant challenge~\cite{SmolLM2,split_2,Teerapittayanon17}.
Modern LLMs contain billions of parameters and demand substantial computational and memory resources, rendering full on-device inference largely impractical~\cite{Teerapittayanon17}.
While recent advancements in model compression~\cite{QAT,LLM_Pruner,CoT_distll} have enabled the construction of smaller LLMs, these approaches often incur undesirable \revise{accuracy–efficiency} trade-offs.
At the same time, the proprietary nature of modern LLMs, coupled with the competitive advantage they confer, disincentivizes vendors from releasing fully deployable models~\cite{gpt4}, further limiting client-side deployment.

A natural alternative is to offload the entire inference process to the cloud~\cite{clould1,clould2}, where the client sends raw inputs to a server hosting the LLM and receives the outputs in return.
However, this indeed introduces severe privacy risks, particularly in privacy-sensitive domains like healthcare~\cite{split_2,abs-2409-19134,abs-1912-12115}.
For instance, if a user's medical queries or personal diary entries are sent directly to a cloud server for LLM processing, it would expose highly confidential information, raising serious privacy concerns.

Split inference has emerged as a promising middle ground~\cite{split_1,split_2,usenix_ref_1,usenix_ref_2,missing_1}.
This paradigm partitions an LLM into two non-overlapping components: a small initial segment residing on the client device, and a large, computationally intensive segment hosted on a remote server.
The client generates intermediate activations that are subsequently transmitted to the server.
The server then completes the remainder of the inference pipeline, with the final inference result sent back to the client.
Split inference offers three key advantages:
(1) It significantly alleviates the computational and storage burden on the client. 
(2) It enhances client privacy by transmitting only intermediate activations, rather than raw inputs, to the server.
(3) It allows model owners to protect their intellectual property, as the bulk of the model parameters and architecture remain on the server.

Despite these benefits, to the best of our knowledge, there remains a blank in the comprehensive investigation of the privacy landscape of LLMs within the split inference.
The assumption that intermediate activations inherently preserve privacy requires rigorous scrutiny, as these activations may still encode information about the original sensitive inputs.
In this paper, we study the privacy implications of LLMs under split inference by answering three key research questions (RQs).

\textit{\textbf{RQ 1: How can the privacy leakage of LLMs in split inference be efficiently quantified and evaluated?}}
We propose \attack, an attack method that reconstructs the client’s raw input from intermediate activations transmitted to the server.
\attack casts reconstruction as an optimization problem over the input embedding space: it searches for embeddings whose forward pass through the client-side submodel reproduces the intercepted activations at the cut layer, and then projects the optimized embeddings back to discrete tokens.
\revise{Unlike common attack methods that sequentially recover tokens via complex enumeration or gradient search, \attack optimizes the full sequence simultaneously, achieving comparable reconstruction accuracy with considerably less computation.}
We evaluate \attack across diverse datasets and LLMs in which \attack can reconstruct high-fidelity reconstructions, even when the intermediate activations are significantly perturbed.

\textit{\textbf{RQ 2: What factors govern the severity of privacy leakage?}}
To disentangle the root causes of privacy leakage, we introduce Perturbation Amplification Factor (PAF) to measures the invertibility hardness of a layer.
A high PAF indicates that small activation perturbations translate into large uncertainty in the inferred input, making accurate reconstruction difficult.
Using PAF, we uncover strong non-uniformity in privacy risk across layers: some layers naturally exhibit high PAF and thus resist reconstruction, whereas others present low PAF and are highly susceptible to leakage.
In particular, we find that many commonly used activation layers introduce low-PAF regions, making them easier to invert despite their benefits for model expressivity.
Moreover, we show that carefully aligned noise in activation space can induce much larger reconstruction errors than isotropic noise of the same magnitude, implying that naive perturbation defenses are suboptimal.

\textit{\textbf{RQ 3: How can we effectively mitigate privacy leakage with a small utility loss?}}
Building on these insights, we propose \defense, which injects adversarially calibrated perturbations into intermediate activations to obscure sensitive information while preserving model utility.
\defense formulates an optimization problem over perturbations with two competing objectives: maximizing reconstruction error for any plausible inversion (privacy) and minimizing the degradation of the server-side output (utility).
Through extensive experiments across multiple LLMs and datasets, we show that \defense achieves considerable improvements over baseline defenses in reducing \attack's reconstruction quality, with small degradation in downstream task performance.

\section{Related Work}
\label{sec6}

Since the release of the GPT series by OpenAI~\cite{gpt4}, LLMs have dominated the NLP landscape.
The architectures of modern LLMs have largely converged: they are predominantly based on autoregressive prediction paradigms and exhibit highly similar layer structures (Figure \ref{sec4_fig1})~\cite{llm_survey}.
The primary distinctions among them therefore arise from differences in training data and optimization strategies.

Research on privacy risks in LLMs can be divided into two stages: training and inference.
At the inference stage, most studies focus on risks of training data exposure, i.e., constructing adversarial prompts to elicit memorized training samples from LLMs.
Early research~\cite{paper9,paper13,paper25} demonstrated that LLMs tend to memorize portions of their pre-training corpus, raising the possibility of extracting training data.
Follow-up studies refined these findings by targeting specific categories of private data~\cite{paper21,paper22}, analyzing how training and decoding choices affect leakage~\cite{paper20,paper10}, and evaluating the risks posed by commercial deployments~\cite{paper14,paper24}.
Additional efforts highlighted broader misuse scenarios, such as extracting personal information directly from web data using LLMs~\cite{paper26}.

Orthogonal to this line of work, we investigate the privacy risks faced by end users of LLM inference services, particularly in split inference.
Split inference itself is a broad paradigm, motivated not only by privacy but also by practical concerns such as reducing communication overhead and enabling on-device computation~\cite{usenix_ref_1,usenix_ref_2}.
Moreover, split inference is closely related to split learning~\cite{paper7,missing_1}, where prior studies has shown that sensitive information can be reconstructed by training inversion networks~\cite{split_20} or by transmitting maliciously crafted gradients~\cite{split_22}, or \revise{through a combination of both approaches to enhance the attack~\cite{missing_2}. Moreover, \citet{missing_1} combined federated learning and split learning to improve privacy-preserving ability.}

A key difference~\cite{split_1,split_2} lies in the server’s visibility into the access to the client model, data distribution, or gradient information, creating a distinct threat landscape.
\revise{Recent work has demonstrated that client-side activations in split LLM inference are vulnerable to inversion attacks. \citet{DBLP:journals/corr/abs-2510-15511} proved that LLMs are almost-surely injective and proposed SipIt, which recovers prompts via autoregressive token-by-token search. \citet{DBLP:conf/ccs/Luo0X25} proposed an auxiliary model-based inversion method to predict inputs matching the given activation values, while \citet{DBLP:conf/uss/Dong00C0Z25} addressed the challenge of inverting deep-layer activations by optimizing over a compact set of vocabulary components rather than the full embedding space. \citet{DBLP:conf/sp/0004ZWXYLZ25} first recovered continuous input embeddings under distribution constraints and then mapped to discrete tokens via activation calibration combined with semantic speculation from an auxiliary language model. While these works collectively establish the feasibility of prompt inversion in split LLM inference, we show that a much simpler attack can also achieve state-of-the-art reconstruction accuracy, indicating that prompt inversion is easier than previously suggested. The practical benefit of \attack is its minimal assumptions about the attacker's capabilities and superior computational efficiency (See Section \ref{sec3}). On the defense side, common protection mechanisms include injecting random noise into activations \cite{missing_3} or applying activation sparsification. Another line of defense involves cryptographic techniques~\cite{split_2,abs-2409-19134}. However, such cryptographic approaches typically incur substantial computational overhead, rendering them impractical for deploying LLMs on resource-constrained devices.}

\section{Split Inference Protocol}
\label{sec2}

In split inference scenario, a client interacts with a server that hosts a proprietary LLM denoted as $F$, which is composed of $Q$ sequential blocks $\{L_j, \dots, L_Q\}$.
A common structure for each block $L_j$ comprises a multi-head attention layer $\text{MHA}_j(\cdot)$ and a feed-forward network layer $\text{FFN}_j(\cdot)$~\cite{llama3,qwen3}.
Formally, given an input hidden state \(\mathbf{h}_{j-1}\), the block computes $\mathbf{h}_j = L_j(\mathbf{h}_{j-1}) = \mathrm{FFN}_j \big(\mathrm{MHA}_j(\mathbf{h}_{j-1})\big) \in \mathbb{R}^{L \times D},$ where $L$ is the sequence length and $D$ is the hidden dimension.

For split inference~\cite{split_1}, \(F\) is partitioned into a client-side submodel \(F_C\) and a server-side submodel \(F_S\). The client holds the first \(Q_1\) blocks, and the server holds the remaining \(Q - Q_1\) blocks:
$
F_C = L_{Q_1} \circ \cdots \circ L_1, \ 
F_S = L_Q \circ \cdots \circ L_{Q_1+1}.
$
In practice, \(Q_1 \ll Q\), so that the client bears only a small fraction of the total computational cost.

Let $V$ be the vocabulary size. 
The client’s input is a token sequence $\mathbf{x} = \{x_1, \dots, x_L\}$, where each $x_i \in \{1, \dots, V\}$.
The LLM generates a response $\mathbf{x}_p$ autoregressively.
Starting from an empty $\mathbf{x}_p$, it repeatedly appends a new token predicted from the concatenation of the original prompt $\mathbf{x}$ and the already generated tokens.
This continues until an end-of-sequence token is produced or a maximum length is reached.
If $K$ tokens are generated, the final response is $\mathbf{x}_p = \{x_{L+1}, \dots, x_{L+K}\}$.
Since every prediction step follows the same split computation pattern, we focus on the first predicted token $x_{L+1}$ for clarity and omit $\mathbf{x}_p$ from the notation in what follows.

\textbf{Client-side computation.}
The client first maps the input tokens $\mathbf{x}$ into embeddings.
Let $\mathbf{E} \in \mathbb{R}^{V \times D}$ be the embedding matrix AND the embedding for token $x_i$ is the $x_i$-th row $\mathbf{e}_{x_i}$ of $\mathbf{E}$.
Thus, the initial hidden representation is $\mathbf{h}_0 = [\mathbf{e}_{x_1}, \cdots, \mathbf{e}_{x_L}]^\top \in \mathbb{R}^{L \times D}$.
The client then applies its local submodel $\mathbf{h}_{Q_1} = F_C(\mathbf{h}_0)$.
The resulting intermediate activations $\mathbf{h}_{Q_1}$ are transmitted to the server.
These activations are the only information about the client’s input that the server observes.

\textbf{Server-side computation.}
Upon receiving $\mathbf{h}_{Q_1}$, the server completes the forward pass by applying the remaining transformer blocks $\mathbf{h}_Q = F_S(\mathbf{h}_{Q_1}).$
Next, the server projects $\mathbf{h}_Q$ to the vocabulary space using an output projection matrix $\mathbf{E}_2 \in \mathbb{R}^{D \times V}$, producing the token logits $\mathbf{P} = \mathbf{h}_Q \mathbf{E}_2 \in \mathbb{R}^{L \times V}.$
A decoding strategy (e.g., greedy decoding or sampling) is then applied to $\mathbf{P}$ to select the next token $x_{L+1}$.

\section{Attack Design and Evaluation}
\label{sec3}

\subsection{Threat Model}
\label{sec_threat_model}

We consider an honest-but-curious server as the adversary.

\textbf{Attacker's goal.}
The server faithfully follows the split inference protocol but passively attempts to reconstruct the client’s original input sequence $\mathbf{x} = \{x_1, \dots, x_L\}$.
Since all subsequent tokens are generated by the server itself, we focus on the privacy of the initial user-supplied tokens.

\textbf{Attacker's knowledge and ability.}
The server has full white-box access to the LLM $F$ (including $\mathbf{E}$\footnote{In fact, $\mathbf{E}$ is also a component of the LLM. We decouple it from $F_C$ for ease of exposition.}) and enjoy sufficient computing resources.
The server can observe the client-transmitted intermediate activations.
The server is strictly prohibited from altering the decoded predictions $\{x_{L+1}, \ldots, x_{L+K}\}$ or engaging in any other active malicious actions to facilitate its attack.
The attack is thus purely passive and relies only on information that the server legitimately receives in standard split inference.

\subsection{Our Attack: \attack}

Note that the server has access to $\mathbf{h}_{Q_1}$, which contains information about the client's raw query $\mathbf{x}$.
A straightforward approach for the server to recover $\mathbf{x}$ would be to initialize a dummy query $\hat{\mathbf{x}}$ and iteratively refine it until its output from the client-side model closely matches the observed $\mathbf{h}_{Q_1}$.
However, this poses a significant challenge: $\hat{\mathbf{x}}$ is a discrete token sequence, not a continuous value.
While one could attempt to infer $\mathbf{x}$ by enumerating all potential candidate input sequences and comparing the distance between their generated activations and $\mathbf{h}_{Q_1}$, this brute-force method leads to an exponentially growing search space.
This rapidly becomes computationally infeasible for any realistic vocabulary size or query length.

\textbf{Attack scheme.}
To circumvent this combinatorial explosion, \attack relaxes the problem into the continuous embedding space.
Rather than directly guessing discrete tokens, the attack optimizes continuous embedding representations of $\hat{\mathbf{x}}$ until its forward activations through $F_C$ align with the observed $\mathbf{h}{Q_1}$.
Once optimized, $\hat{\mathbf{h}}_0^*$ is projected back into discrete tokens via nearest-neighbor search in $\mathbf{E}$.
This process unfolds in two key phases:
\begin{itemize}[leftmargin=*,topsep=1pt]
    \item \attack first randomly initializes a continuous embedding vector, $\hat{\mathbf{h}}_0 \in \mathbb{R}^{L \times D}$.
    Then $\hat{\mathbf{h}}_0$ is optimized to make $F_C(\hat{\mathbf{h}}_0)$ as close the observed intermediate activations $\mathbf{h}_{Q_1}$ as possible:
    \begin{equation}
    \label{eq_act_matching}
        \hat{\mathbf{h}}_0^* = \underset{\hat{\mathbf{h}}_0}{\arg \min} \  \text{Dist}\left(F_C(\hat{\mathbf{h}}_0), \ \mathbf{h}_{Q_1}\right).
    \end{equation}
    Notice that, since ${F}_C$ is composed entirely of differentiable layers~\cite{llm_survey}, gradient-based optimizers like Adam or SGD can be used effectively.
    The server's full knowledge of ${F}_C$'s parameters allows it to compute gradients efficiently.

    \item Once the optimization converges, the next step is to convert $\hat{\mathbf{h}}_0^*$ back into discrete tokens that form the reconstructed query.
    For each row $j$ of $\hat{\mathbf{h}}_0^*$ (which corresponds to the reconstructed embedding for the $j$-th token in $\mathbf{x}$), \attack identifies its closest word embedding within the predefined embedding layer $\mathbf{E}$.
    The token associated with this closest embedding then becomes the reconstructed token for position $j$.
    Formally, this process is expressed:
    $$
    \hat{x}_j = \underset{v \in \{1, \dots, V\}}{\arg \min} \text{Dist}(\hat{\mathbf{h}}_0^*[j], \mathbf{e}_v),
    $$
    where $\hat{\mathbf{h}}_0^*[j]$ refers to the $j$-th row of $\hat{\mathbf{h}}_0^*$ and $\mathbf{e}_v$ is the embedding of the token $v$.
    By repeating this projection for all $L$ token positions, \attack can reconstruct the entire input sequence $\hat{\mathbf{x}}$.
\end{itemize}
\revise{Although simple and akin to common inversion attacks, \attack is remarkably effective and avoids overly elaborate techniques, preserving its generality across split inference variants. 
Moreover, \citet{DBLP:journals/corr/abs-2510-15511} proved that LLMs are almost-surely injective, meaning that distinct input sequences must produce distinct hidden-layer activations.
The contrapositive of this injectivity result—identical activations imply identical inputs—provides a strong theoretical foundation for \attack: optimizing the continuous embeddings until the activation distance (Equation~\ref{eq_act_matching}) approaches zero should, under idealized conditions, guarantee perfect reconstruction of the entire sequence.
While their method, SipIt, leverages this injectivity for autoregressive, token-by-token recovery, \attack simultaneously optimizes all token embeddings.
Conceptually, this mirrors standard joint gradient descent versus coordinate descent.
This divergence grants \attack a notable advantage in computational efficiency.
Because the core complexity of the Transformer self-attention mechanism is $\mathcal{O}(n^2)$ for a sequence of length $n$, the total time complexity of \attack over $T$ optimization iterations is bounded by $\mathcal{O}(Tn^2)$.
By contrast, SipIt's sequential token reconstruction mandates repeated forward/backward passes of increasing lengths, resulting in a total computational cost of $\mathcal{O}(T(1^2 + \dots + n^2)) = \mathcal{O}(Tn^3)$.
Thus, \attack not only inherits the exact recovery guarantees provided by the injectivity theory but also translates them into a highly practical attack with a one-order-of-magnitude algorithmic speedup.}

\subsection{Evaluation Setup}

\textbf{Attack realization.}
In our specific implementation of \attack, the continuous embedding vector $\hat{\mathbf{h}}_0$ is initialized by randomly sampling $L$ rows directly from $\mathbf{E}$.
For $\text{Dist}(\cdot, \cdot)$, we adopt the cosine distance.
We use Adam optimizer with a learning rate of 0.01 to solve Equation \ref{eq_act_matching} for 2000 iterations.

\noindent
\textbf{Datasets.} 
We conduct our evaluations on two datasets: AlpacaEval and iCliniq.
AlpacaEval, with 805 records, serves as a benchmark for general knowledge question-answering, while iCliniq comprises 7321 dialogue records between patients and doctors, focusing on medical consultations.
Using the Qwen3 tokenizer, the average token lengths are 35.86 and 111.89 for AlpacaEval and iCliniq, respectively.

\begin{table}[!t]
\centering
\caption{\attack's reconstruction performance on AlpacaEval and iCliniq. We report Precision, Recall, and ROUGE-L scores against different sizes of Qwen3 and Falcon3 models. D. = Dataset. The values presented are the average results across samples, with the $\pm$ values indicating the standard deviation. Best results are in bold; second-best are underlined.}
\label{sec3_tab1}
\small
\begin{tabular}{c|l|ccc}
\hline \hline
{{Dataset}} & Model & Precision & Recall & ROUGE-L \\
\hline
\multirow{10}{*}{\rotatebox{90}{\makecell{AlpacaEval}}} 
    & Qwen3-0.6B  & \result{{99.83}}{0.61} & \result{98.54}{1.91} & \result{{0.96}}{0.04} \\
    & Qwen3-1.7B  & \result{\textbf{100.00}}{0.02}   & \result{\textbf{99.76}}{0.91} & \result{\textbf{0.99}}{0.02} \\
    & Qwen3-4B    & \result{99.77}{1.01}             & \result{{99.59}}{1.24} & \result{0.99}{0.02} \\
    & Qwen3-8B    & \result{\underline{99.91}}{1.98}             & \result{99.31}{1.52} & \result{0.98}{0.03} \\
    & Qwen3-14B   & \result{99.89}{1.70}             & \result{\underline{99.62}}{1.28} & \result{0.98}{0.02} \\
    & Qwen3-30B   & \result{99.95}{1.59}             & \result{99.49}{1.73} & \result{\underline{0.99}}{0.04} \\
    & Falcon3-1B  & \result{95.69}{2.76}             & \result{92.25}{4.28} & \result{0.90}{0.06} \\
    & Falcon3-3B  & \result{97.89}{1.96}             & \result{95.83}{2.45} & \result{0.95}{0.04} \\
    & Falcon-7B   & \result{98.82}{2.31}             & \result{97.90}{3.70} & \result{0.97}{0.06} \\
    & Falcon-10B  & \result{99.25}{2.21}             & \result{99.14}{3.21} & \result{0.99}{0.05} \\
\hline
\multirow{10}{*}{\rotatebox{90}{iCliniq}} 
    & Qwen3-0.6B  & \result{99.81}{0.42}             & \result{98.45}{1.33} & \result{{0.98}}{0.02} \\
    & Qwen3-1.7B  & \result{\textbf{99.99}}{0.06}    & \result{\textbf{99.90}}{0.33} & \result{\textbf{0.99}}{0.01} \\
    & Qwen3-4B    & \result{\underline{99.94}}{0.27} & \result{\underline{99.84}}{0.44} & \result{\underline{0.99}}{0.01} \\
    & Qwen3-8B    & \result{99.82}{1.79}             & \result{99.28}{1.65} & \result{0.98}{0.02} \\
    & Qwen3-14B   & \result{99.75}{1.05}             & \result{99.54}{1.81} & \result{0.99}{0.02} \\
    & Qwen3-30B   & \result{99.90}{1.29}             & \result{99.47}{1.25} & \result{0.99}{0.03} \\
    & Falcon3-1B  & \result{96.81}{2.39}             & \result{92.10}{2.77} & \result{0.91}{0.04} \\
    & Falcon3-3B  & \result{98.48}{1.27}             & \result{96.17}{1.90} & \result{0.96}{0.03} \\
    & Falcon-7B   & \result{98.86}{2.28}             & \result{97.58}{3.26} & \result{0.98}{0.04} \\
    & Falcon-10B  & \result{99.20}{2.19}             & \result{99.12}{2.08} & \result{0.99}{0.05} \\
\hline \hline
\end{tabular}
\end{table}

\noindent
\textbf{Models.}
We employ two recently released LLM families: Qwen3-\{0.6, 1.7, 4, 8, 14, 30\}B~\cite{qwen3} and Falcon3-\{1, 3, 7, 10\}B\footnote{\url{https://huggingface.co/blog/falcon3}}.

\noindent
\textbf{Metrics.}
We include Precision, Recall, and ROUGE-L to evaluate the similarity between the reconstructed query and the original query.
Precision measures the proportion of correctly reconstructed tokens among all tokens in the reconstructed query while Recall measures the proportion of correctly reconstructed tokens among all tokens in the original query.
ROUGE-L assesses the overlap between the reconstructed query and the original query based on their longest common subsequence.

\noindent
\textbf{Hyperparameters.}
For the client-side model configuration, we use $Q_1=5$, meaning the client-side model ($F_C$) consists of the first five blocks of the LLM.

\subsection{Evaluation Results}

\begin{table*}[!t]
\centering
\caption{The \attack's effectiveness against random Gaussian noise and activation sparsification (element and token) in Qwen3-0.6B and Falcon3-1B. 
The defense levels, ranging from 1 to 5, correspond to Gaussian noise intensities $\{10^{-4},10^{-3},10^{-2},10^{-1},1\}$ 
and sparsity ratios $\{0.1,0.3,0.5,0.7,0.9\}$. M. = Model; L. = Level. Results are presented as mean $\pm$ std., with the best results highlighted in bold.}
\label{sec3_tab2}
\small
\begin{tabular}{c|l|c|ccc|ccc|ccc}
\hline \hline
\multirow{2}{*}{Dataset} & \multirow{2}{*}{M.} & \multirow{2}{*}{L.}
& \multicolumn{3}{c|}{Precision} 
& \multicolumn{3}{c|}{Recall} 
& \multicolumn{3}{c}{ROUGE-L} \\
& & & Gaussian & Element & Token 
      & Gaussian & Element & Token 
      & Gaussian & Element & Token \\ \hline

\multirow{10}{*}{\rotatebox{90}{AlpacaEval}}
& \multirow{5}{*}{\rotatebox{90}{Qwen3-0.6B}} 
& 1 & \result{\textbf{99.81}}{0.62} & \result{99.68}{0.88} & \result{99.73}{0.96} 
    & \result{98.49}{2.34} & \result{98.38}{2.42} & \result{\textbf{98.56}}{2.06} 
    & \result{0.95}{0.03}  & \result{\textbf{0.96}}{0.03}  & \result{{0.96}}{0.03} \\
& & 2 & \result{\textbf{99.73}}{0.78} & \result{99.38}{1.37} & \result{99.39}{1.76} 
    & \result{\textbf{98.51}}{2.40} & \result{97.41}{2.66} & \result{97.04}{2.99} 
    & \result{\textbf{0.96}}{0.04}  & \result{0.95}{0.03}  & \result{0.95}{0.04} \\
&  & 3 & \result{\textbf{99.39}}{1.91} & \result{92.92}{7.23} & \result{95.65}{5.09} 
    & \result{\textbf{98.34}}{2.41} & \result{85.53}{7.54} & \result{89.16}{6.06} 
    & \result{\textbf{0.95}}{0.04}  & \result{0.93}{0.04}  & \result{0.93}{0.04} \\
&  & 4 & \result{\textbf{94.53}}{4.44} & \result{69.63}{10.96} & \result{72.85}{10.39} 
    & \result{\textbf{86.77}}{5.80} & \result{61.05}{9.38}  & \result{63.72}{8.85} 
    & \result{\textbf{0.93}}{0.04}  & \result{0.79}{0.07}   & \result{0.79}{0.08} \\
&  & 5 & \result{3.63}{3.45}  & \result{\textbf{27.80}}{9.78}  & \result{25.89}{10.78} 
    & \result{2.94}{2.62}  & \result{\textbf{22.57}}{7.40}  & \result{21.04}{7.41} 
    & \result{0.06}{0.05}  & \result{\textbf{0.35}}{0.10}   & \result{0.29}{0.10} \\ \cline{2-12}

& \multirow{5}{*}{\rotatebox{90}{Falcon3-1B}} 
& 1 & \result{\textbf{96.36}}{2.74} & \result{95.63}{2.63} & \result{95.46}{2.37}
    & \result{92.71}{3.97} & \result{\textbf{92.78}}{4.65} & \result{92.53}{4.06}
    & \result{\textbf{0.91}}{0.06}  & \result{{0.91}}{0.07}  & \result{0.90}{0.07} \\
&  & 2 & \result{\textbf{95.84}}{3.20} & \result{95.42}{3.00} & \result{95.70}{2.71}
    & \result{\textbf{93.19}}{4.01} & \result{92.46}{4.98} & \result{91.99}{4.02}
    & \result{\textbf{0.91}}{0.06}  & \result{{0.91}}{0.07}  & \result{0.90}{0.07} \\
&  & 3 & \result{\textbf{95.76}}{3.22} & \result{94.26}{3.05} & \result{95.05}{3.12}
    & \result{\textbf{93.05}}{4.21} & \result{91.02}{4.46} & \result{91.47}{4.28}
    & \result{\textbf{0.92}}{0.06}  & \result{0.90}{0.06}  & \result{0.90}{0.06} \\
&  & 4 & \result{\textbf{95.57}}{3.24} & \result{88.38}{5.53} & \result{89.89}{5.36}
    & \result{\textbf{93.08}}{4.19} & \result{81.62}{6.73} & \result{84.44}{6.27}
    & \result{\textbf{0.92}}{0.06}  & \result{0.84}{0.07}  & \result{0.84}{0.08} \\
&  & 5 & \result{\textbf{78.72}}{4.96} & \result{34.41}{9.39} & \result{41.84}{11.62}
    & \result{\textbf{67.95}}{8.58} & \result{25.53}{5.94} & \result{30.97}{7.43}
    & \result{\textbf{0.65}}{0.10}  & \result{0.23}{0.10}  & \result{0.27}{0.12} \\ \hline

\multirow{10}{*}{\rotatebox{90}{iCliniq}}
&\multirow{5}{*}{\rotatebox{90}{Qwen3-0.6B}} 
& 1 & \result{\textbf{99.72}}{0.60} & \result{99.68}{0.69} & \result{99.69}{0.58} 
    & \result{\textbf{98.57}}{1.84} & \result{98.37}{1.55} & \result{98.44}{1.25} 
    & \result{{0.98}}{0.02}  & \result{\textbf{0.98}}{0.02}  & \result{{0.98}}{0.02} \\
& & 2 & \result{99.69}{0.56} & \result{99.38}{0.97} & \result{\textbf{99.73}}{0.49} 
    & \result{\textbf{98.72}}{1.32} & \result{96.94}{1.93} & \result{97.87}{1.51} 
    & \result{\textbf{0.98}}{0.01}  & \result{{0.98}}{0.02}  & \result{{0.98}}{0.02} \\
&  & 3 & \result{\textbf{99.67}}{0.73} & \result{96.91}{3.55} & \result{97.71}{1.83} 
    & \result{\textbf{98.48}}{1.76} & \result{91.18}{3.95} & \result{92.72}{3.19} 
    & \result{\textbf{0.98}}{0.01}  & \result{0.96}{0.03}  & \result{0.96}{0.02} \\
&   & 4 & \result{\textbf{96.83}}{2.85} & \result{84.29}{8.05} & \result{84.25}{8.15} 
    & \result{\textbf{90.75}}{4.28} & \result{73.25}{7.45} & \result{74.44}{7.30} 
    & \result{\textbf{0.95}}{0.03}  & \result{0.85}{0.06}  & \result{0.86}{0.06} \\
& & 5 & \result{4.08}{3.08}  & \result{40.83}{11.28} & \result{\textbf{41.16}}{11.43} 
    & \result{2.66}{1.95}  & \result{28.86}{6.60}  & \result{\textbf{28.87}}{7.11} 
    & \result{0.05}{0.03}  & \result{\textbf{0.37}}{0.07}   & \result{0.35}{0.08} \\ \cline{2-12}

&\multirow{5}{*}{\rotatebox{90}{Falcon3-1B}}
& 1 & \result{96.82}{1.93} & \result{96.94}{2.13} & \result{\textbf{96.96}}{2.20}
    & \result{92.10}{2.88} & \result{91.76}{3.08} & \result{\textbf{92.14}}{3.08}
    & \result{\textbf{0.91}}{0.04}  & \result{0.90}{0.04}  & \result{0.90}{0.05} \\
&  & 2 & \result{96.91}{1.92} & \result{96.89}{2.10} & \result{\textbf{97.11}}{1.99}
    & \result{\textbf{92.25}}{2.93} & \result{91.81}{3.42} & \result{91.73}{2.92}
    & \result{\textbf{0.91}}{0.04}  & \result{0.90}{0.04}  & \result{{0.91}}{0.04} \\
&   & 3 & \result{\textbf{96.89}}{1.78} & \result{96.26}{2.63} & \result{96.51}{2.05}
    & \result{\textbf{92.41}}{2.87} & \result{91.28}{2.67} & \result{90.80}{3.16}
    & \result{\textbf{0.91}}{0.04}  & \result{0.90}{0.04}  & \result{0.90}{0.04} \\
&   & 4 & \result{\textbf{97.10}}{1.99} & \result{93.48}{2.97} & \result{93.90}{2.08}
    & \result{\textbf{92.65}}{3.10} & \result{85.94}{3.47} & \result{85.87}{3.74}
    & \result{\textbf{0.91}}{0.04}  & \result{0.85}{0.06}  & \result{0.85}{0.04} \\
&   & 5 & \result{\textbf{81.24}}{5.82} & \result{48.12}{9.85} & \result{52.27}{8.34}
    & \result{\textbf{68.32}}{4.92} & \result{34.46}{5.21} & \result{39.89}{5.68}
    & \result{\textbf{0.66}}{0.06}  & \result{0.31}{0.06}  & \result{0.39}{0.07} \\ \hline \hline
\end{tabular} 
\end{table*}

\begin{figure*}[!ht]
\centering
\includegraphics[width=1\linewidth]{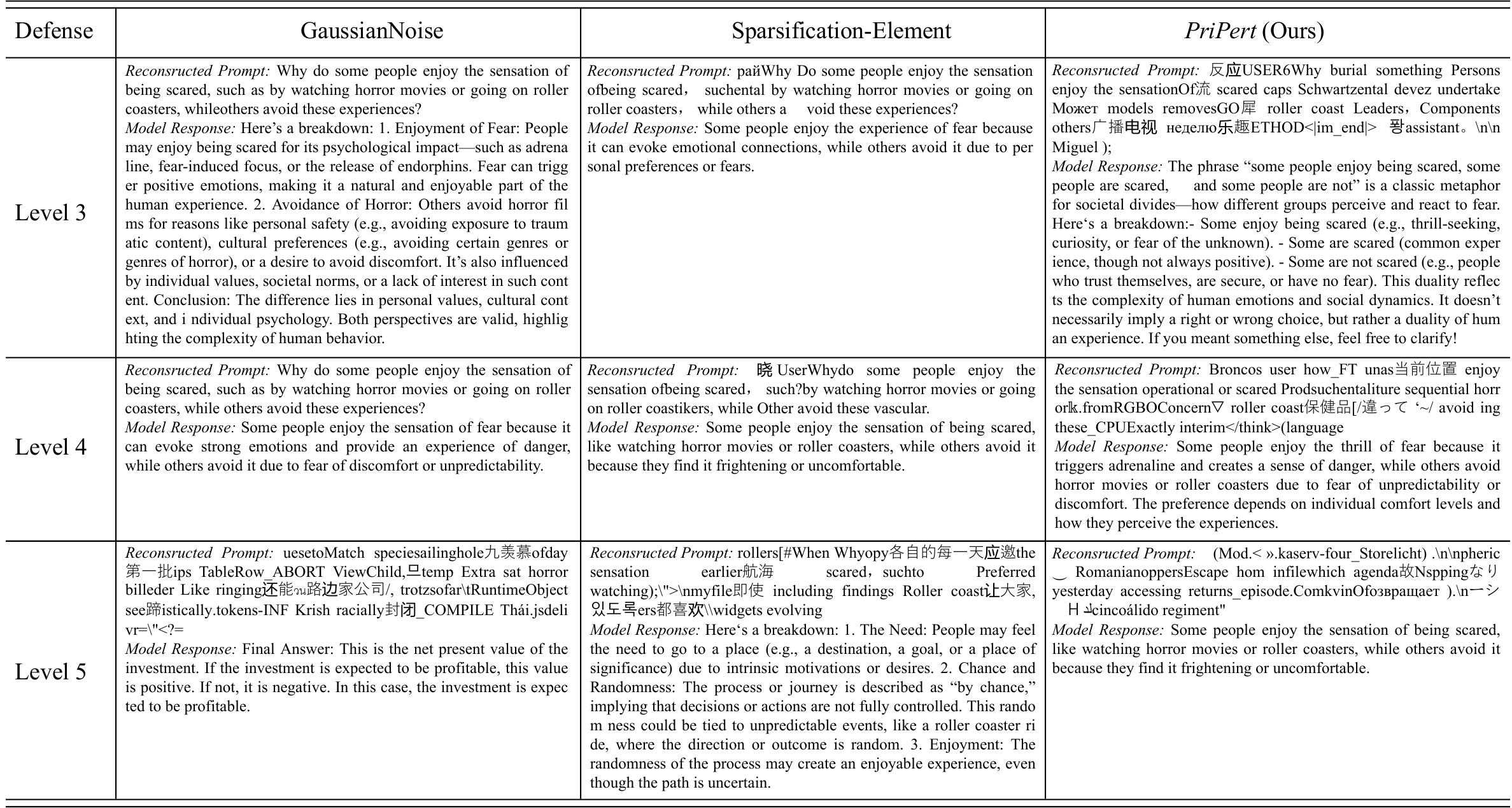} 
\caption{We randomly extract a prompt from AlpacaEval: \textcolor{cyan}{Why do some people enjoy the sensation of being scared, such as by watching horror movies or going on roller coasters, while others avoid these experiences?} We apply \attack to reconstruct the original prompt using activations perturbed by different defense methods. The corresponding model responses using perturbed intermediate activations are also provided.}
\label{fig_case}
\end{figure*}

\begin{table}[!t]
\centering
\caption{The impact of increasing the number of client-side blocks ($Q_1$) on \attack's performance. We employ Qwen3-0.6B. The values listed are the mean results with their corresponding standard deviations. The best results are given in bold.}
\label{sec3_tab3}
\small
\begin{tabular}{c|c|ccc}
\hline \hline
{Dataset}              & Block & Precision & Recall & ROUGE-L \\ \hline
\multirow{6}{*}{\rotatebox{90}{{AlpacaEval}}} 
    & 2 & \result{\textbf{99.76}}{1.20} & \result{\textbf{98.89}}{1.94} & \result{\textbf{1.00}}{0.00} \\ 
    & 3 & \result{99.34}{1.50} & \result{96.95}{2.90} & \result{{1.00}}{0.01} \\
    & 4 & \result{95.79}{5.19} & \result{91.05}{6.46} & \result{0.94}{0.04} \\
    & 5 & \result{92.92}{7.23} & \result{85.53}{7.54} & \result{0.93}{0.04} \\
    & 6 & \result{86.09}{8.10} & \result{77.05}{8.23} & \result{0.91}{0.05} \\
    & 7 & \result{77.74}{9.66} & \result{69.82}{8.30} & \result{0.87}{0.06} \\ \hline
\multirow{6}{*}{\rotatebox{90}{iCliniq}} 
    & 2 & \result{\textbf{99.91}}{0.32} & \result{\textbf{99.60}}{1.00} & \result{\textbf{1.00}}{0.01} \\
    & 3 & \result{99.58}{0.72} & \result{98.48}{1.31} & \result{{1.00}}{0.01} \\
    & 4 & \result{98.38}{2.05} & \result{95.33}{2.74} & \result{0.98}{0.02} \\
    & 5 & \result{96.91}{3.55} & \result{91.18}{3.95} & \result{0.96}{0.03} \\
    & 6 & \result{93.76}{3.95} & \result{85.20}{4.61} & \result{0.94}{0.03} \\
    & 7 & \result{86.81}{6.41} & \result{75.81}{5.25} & \result{0.89}{0.05} \\ \hline \hline
\end{tabular}
\end{table}

\begin{figure*}[!ht]
    \centering
    \subfloat[Precision\label{fig:precision}]{
        \includegraphics[width=0.32\linewidth]{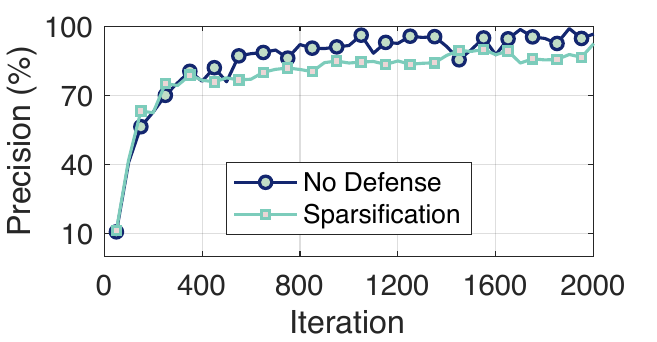}
    }
    \hfill
    \subfloat[Recall\label{fig:recall}]{
        \includegraphics[width=0.32\linewidth]{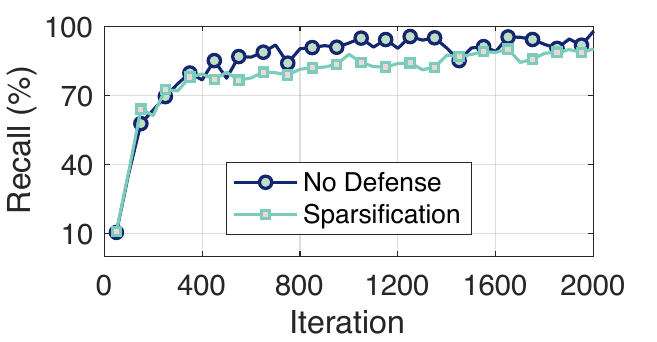}
    }
    \hfill
    \subfloat[ROUGE-L\label{fig:recall}]{
        \includegraphics[width=0.32\linewidth]{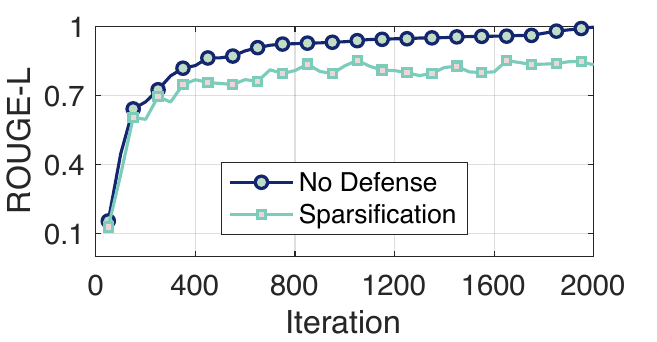}
    }
    \caption{The \attack's Precision, Recall, and ROUGE-L scores evolve over 2000 optimization iterations in AlpacaEval. We use a sparsification ratio of 0.5.}
    \label{sec3_fig1}
\end{figure*}

\textbf{Overall performance.}
Table \ref{sec3_tab1} reports the performance of \attack against various LLMs on both AlpacaEval and iCliniq.
The results demonstrate that \attack consistently achieves Precision and Recall exceeding 98\% across nearly all evaluated cases.
Furthermore, the ROUGE-L scores remain consistently high, typically surpassing 0.96.
These results confirm that, under our honest-but-curious threat model, vanilla split inference provides little practical privacy, which motivates our examination into whether perturbing activations can effectively defend against \attack.

\noindent
\textbf{Evaluating common defense methods.}
We now evaluate the effectiveness of two common defense strategies against \attack: activation sparsification and random noise perturbation.
Since we observe that larger models tend to exhibit stronger privacy leakage, we report results for Qwen3-0.6B and Falcon3-1B here to better illustrate how different factors relate to privacy leakage.
Please see Table~\ref{sec3_tab_qwen30b} for the results in larger models.
Activation sparsification reduces privacy leakage by zeroing out a percentage of the intermediate activations with the smallest absolute magnitude, while another one is to perturb the intermediate activations with random Gaussian noise.
Activation sparsification can be applied at different granularities, namely at the element level or at the token level.
Both defenses can obfuscate the true activations, making reconstruction more difficult.
Notice that we only perturb exclusively the activations of the client-side model's final block, rather than all blocks.
Table~\ref{sec3_tab2} shows that neither method substantially impairs \attack.
For activation sparsification, we vary the pruning ratio in \(\{0.1, 0.3, 0.5, 0.7, 0.9\}\).
Even when pruning up to 70\% of elements, Precision and Recall decrease only modestly.
\textbf{Element-level sparsification is more effective than token-level sparsification, so we adopt element-level sparsification as the default sparsification defense in the remainder of the paper.}
For Gaussian noise, we consider variances in \(\{10^{-4}, 10^{-3}, 10^{-2}, 10^{-1}\}\).
Across this range, \attack remains largely resilient.
Only when the noise magnitude reaches \(10^{0}\) do we see a substantial drop in attack performance, but at that point, the downstream inference accuracy of the server-side model collapses as well.
This illustrates a fundamental tension: noise strong enough to meaningfully protect privacy also destroys utility.
We next provide qualitative examples and leave the detailed quantitative results to Section \ref{sec5}.

\noindent
\textbf{Case study.}
We select a sample from AlpacaEval to illustrate \attack's performance against varying defenses.
As seen in Figure \ref{fig_case}, with defense levels 3-4, the reconstructed queries are nearly indistinguishable from the originals, demonstrating high-fidelity reconstruction with only minor lexical artifacts.
This indicates that \attack not only recovers token-level details but also accurately captures the semantic intent behind the query.
At the strongest defense setting (level 5), reconstruction quality degrades markedly, with missing and corrupted segments.
Correspondingly, the model's responses also become corrupted.

\noindent
\textbf{Number of blocks ($Q_1$).}
Table~\ref{sec3_tab3} reports the performance of \attack when varying the number of client-side blocks ($Q_1$) from 2 to 7.
To avoid confounding effects due to excessively strong attack performance, we adopt activation sparsification (element-level) with a fixed ratio of 0.5 by default.
As $Q_1$ increases, reconstruction quality gradually deteriorates.
This is likely due to two factors: first, each subsequent block performs additional feature extraction and compression, leading to an inherent loss of information that hinders the reconstruction process.
Second, as the client-side model becomes more complex, the reconstruction optimization problem becomes more challenging.
However, this effect is limited.
Even with a large $Q_1 = 7$, \attack still achieves Precision and Recall around or above 70\%.
This indicates that while it offers a degree of mitigation, it does not completely deter \attack.

\noindent
\textbf{Attack cost: number of iterations.}
Figure \ref{sec3_fig1} shows the change in Precision, Recall, and ROUGE-L over the course of the attack, averaged over 100 samples.
As shown, most of the performance gain is achieved within the first 1000 iterations, at which point most semantics of the client's inputs can be inferred, with minimal improvement thereafter.
We then time the attack on a single NVIDIA GeForce RTX 4090 GPU and find that 100 iterations per token require $0.2012 \pm 0.1095$ seconds (mean $\pm$ standard deviation).
This indicates that for a 100-token query, \attack would take approximately 20 seconds for 1000 iterations and 40 seconds for 2000 iterations.
These times are not a significant practical constraint for an attacker.

\noindent
\revise{\textbf{Comparison with state-of-the-art attacks.}
We here compare \attack against four prompt inversion attacks, including A1~\cite{DBLP:conf/ccs/Luo0X25}, SipIt~\cite{DBLP:journals/corr/abs-2510-15511}, TBS~\cite{DBLP:conf/uss/Dong00C0Z25}, and PIA~\cite{DBLP:conf/sp/0004ZWXYLZ25}.
Concretely, SipIt performs autoregressive token-by-token inversion, while A1 trains a surrogate model to predict the token corresponding to a given hidden state.
TBS and PIA are more closely related to \attack in that they also attempt to recover hidden embeddings and then project them back to the discrete token space.
The difference is that TBS optimizes over a restricted vocabulary subspace, whereas PIA leverages an auxiliary model to guide the projection step.
In contrast, \attack directly optimizes the entire prompt embedding sequence to match the observed client-side activations, and projects the optimized embeddings back to tokens only at the final step via nearest-neighbor search.
We use Qwen3-8B, AlpacaEval, and $Q_1=5$.
Because SipIt incurs substantially higher runtime, we truncate each prompt to 10\% of its original length when comparing SipIt with \attack.
Table~\ref{tab:attack_compare_all} yields three observations where we use element-level sparsification with a ratio of 0.5 and Gaussian noise with magnitude $10^{-1}$.
First, \attack achieves reconstruction quality on par with the strongest prior attacks.
Its reconstruction score reaches about 99\%, essentially matching SipIt, slightly exceeding PIA, and substantially outperforming TBS and A1.
This is particularly notable given that \attack does not rely on additional resources such as auxiliary models.
Second, \attack is substantially more efficient in practice.
Among the baselines, SipIt is the closest competitor in reconstruction quality, but it requires 4.92 minutes per sample, whereas \attack takes only 0.78 minutes.
This efficiency gain arises because SipIt reconstructs prompts sequentially, while \attack jointly optimizes all token embeddings in a single continuous optimization process.
We note that A1 appears very efficient because most of its cost lies in collecting data and training a lightweight surrogate model.
Here, we report only the inference-time cost of the trained surrogate.
Overall, if the remaining baselines are also considered, \attack offers a favorable trade-off between reconstruction quality and attack cost.}

\begin{table}[t]
\centering
\caption{\revise{Comparison between \attack and four attacks on Qwen3-8B. We report the arithmetic mean of token-level Precision and Recall and the average runtime per sample on a single RTX 4090.}}
\label{tab:attack_compare_all}
\small
\begin{tabular}{l|c|c|c}
\hline \hline
\textbf{Attack} & \textbf{No Defense} & \textbf{Sparsification} & \textbf{Time / sample} \\
\hline
A1~\cite{DBLP:conf/ccs/Luo0X25} & 43.08 & 18.57 & 0.22 min \\
TBS~\cite{DBLP:conf/uss/Dong00C0Z25} & 88.32 & 81.97 & 7.43 min \\
PIA~\cite{DBLP:conf/sp/0004ZWXYLZ25} & 98.48 & 95.57 & 9.59 min \\
\attack & \textbf{99.52} & \textbf{96.61} & \textbf{7.20 min} \\
\hline
\hline
\textbf{Attack} & \textbf{No Defense} & \textbf{Gaussian Noise} & \textbf{Time / sample} \\
\hline
SipIt~\cite{DBLP:journals/corr/abs-2510-15511} & \textbf{99.69} & 90.63 & 4.92 min \\
\attack & 99.65 & \textbf{90.65} & \textbf{0.78 min} \\
\hline \hline
\end{tabular}
\end{table}

\section{Unpacking Attack Effectiveness: What Makes \attack So Potent?}
\label{sec4}

Section \ref{sec3} demonstrates that \attack is surprisingly effective at reconstructing private inputs, even under common perturbation defenses.
This finding motivates a deeper investigation into the underlying factors that contribute to \attack's success.
Specifically, this section aims to study: \textbf{\textit{What components of LLMs render them so vulnerable to \attack?}}

To address this, we define a layer-wise sensitivity metric to quantify a layer's intrinsic resistance to reconstruction.
\revise{Intuitively, a high-sensitivity layer is one where small changes at its output induce large changes at its input.
In such layers, even tiny perturbations at the output can severely disrupt inversion, making accurate recovery of the original input difficult. Conversely, low-sensitivity layers are locally stable.
Wherein, small output discrepancies translate into small input errors, so approximate activation matching, as performed by \attack, is sufficient to recover the underlying embeddings and tokens.
By understanding and quantifying this layer-wise sensitivity, we can identify vulnerable components and pave the way for designing more effective defenses.}

\subsection{Quantifying Layer Sensitivity}

Let $\mathbf{y} = f(\mathbf{z}), \mathbf{z} \in \mathbb{R}^{1 \times n},  \mathbf{y} \in \mathbb{R}^{1 \times m}$ represent the forward pass of a certain layer.
To assess the sensitivity of this layer, we analyze how a perturbation $\delta$ introduced at the output, $\mathbf{y}$, affects the corresponding reconstructed input, $\hat{\mathbf{z}}$:
\begin{equation}
\label{sec4_1_eq1}
    \mathbf{y} + \delta = f(\hat{\mathbf{z}}).
\end{equation}
A layer is considered highly sensitive if a small $\delta$ causes a substantial deviation in the reconstructed input, i.e., $\|\hat{\mathbf{z}} - \mathbf{z}\|_2$ is large.
This characteristic is desirable from a defense perspective, as it makes it more difficult for an adversary to accurately reconstruct the original input.
To analyze the relationship between the output perturbation $\delta$ and the input deviation $\Delta = \hat{\mathbf{z}} - \mathbf{z}$, we apply a first-order Taylor expansion of $f$ around $\mathbf{z}$:
$f(\hat{\mathbf{z}}) = f(\mathbf{z} + \Delta) \approx f(\mathbf{z}) + \Delta \mathbf{J},$
where $\mathbf{J} \in \mathbb{R}^{n \times m}$ is the Jacobian matrix of $f$ evaluated at $\mathbf{z}$.
This approximation is justified because most LLM layers are linear transformations, and nonlinearities are locally linear over typical operating ranges.
Substituting this approximation back into Equation \ref{sec4_1_eq1} gives:
$\mathbf{y} + \delta \approx f(\mathbf{z}) +  \Delta \mathbf{J}.$
Since $\mathbf{y} = f(\mathbf{z})$, we can simplify this to: $\delta \approx  \Delta \mathbf{J},$ with the relationship between $\delta$ and $\Delta$ following Theorem \ref{thm_linear}.

\begin{theorem}
\label{thm_linear}
    Consider $\delta =  \Delta \mathbf{J}.$
    Let $\{\sigma_i\}_{i=1}^n$ and $\{\mu_i\}_{i=1}^n$ denote the eigenvalues and orthonormal eigenvectors of $\mathbf{J} \mathbf{J}^\top,$ respectively.
    Then, we have:
    \begin{equation}
    \label{sec4_1_eq2}
        ||\Delta||_2^2 \leq \sum_i \left( \frac{||\delta \mathbf{J}||_2 \cos \theta_i }{\sqrt{\sigma_i+1 }} \right)^2,
    \end{equation}
    where $\theta_i$ denotes the angle between $\delta \mathbf{J}$ and the subspace spanned by $\mu_i$.
\end{theorem}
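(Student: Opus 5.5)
The plan is to read the statement with $\delta\mathbf{J}$ interpreted as $\delta\mathbf{J}^\top\in\mathbb{R}^{1\times n}$, which is the only dimensionally consistent reading since $\mathbf{J}\in\mathbb{R}^{n\times m}$ and $\delta\in\mathbb{R}^{1\times m}$. I would then work entirely in the orthonormal eigenbasis $\{\mu_i\}$ of the symmetric positive semidefinite matrix $\mathbf{J}\mathbf{J}^\top$.

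\textbf{Step 1: normal equation.} Multiplying the hypothesis $\delta=\Delta\mathbf{J}$ on the right by $\mathbf{J}^\top$ gives
\begin{equation*}
\delta\mathbf{J}^\top=\Delta\,\mathbf{J}\mathbf{J}^\top .
\end{equation*}

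\textbf{Step 2: coordinates.} Expand $\Delta=\sum_i a_i\mu_i$. Then $\delta\mathbf{J}^\top=\sum_i a_i\sigma_i\mu_i$. Because the subspace spanned by $\mu_i$ is a line, the angle $\theta_i$ satisfies
\begin{equation*}
|\langle\delta\mathbf{J}^\top,\mu_i\rangle|=\|\delta\mathbf{J}^\top\|_2\cos\theta_i .
\end{equation*}
Hence $|a_i|\sigma_i=\|\delta\mathbf{J}^\top\|_2\cos\theta_i$ for every $i$.

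\textbf{Step 3: kernel directions.} For indices with $\sigma_i=0$, Step 2 forces $\cos\theta_i=0$, because $\delta\mathbf{J}^\top$ lies in the range of $\mathbf{J}$. So the corresponding right-hand terms vanish. The components $a_i$ in these directions are not determined by $\delta$ at all. To control $\|\Delta\|_2^2$ I would therefore take $\Delta$ to be the component orthogonal to $\ker\mathbf{J}^\top$, i.e.\ the minimum-norm solution, setting $a_i=0$ there. This is natural in the inversion setting, since kernel directions are invisible to the attacker.

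\textbf{Step 4: exact identity.} With that choice, Parseval gives
\begin{equation*}
\|\Delta\|_2^2=\sum_{\sigma_i>0}\left(\frac{\|\delta\mathbf{J}^\top\|_2\cos\theta_i}{\sigma_i}\right)^2 .
\end{equation*}

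\textbf{Step 5: termwise comparison.} The remaining task is to compare $1/\sigma_i^2$ with $1/(\sigma_i+1)$ term by term. This is the step I expect to be the main obstacle. The termwise inequality $1/\sigma_i^2\le 1/(\sigma_i+1)$ holds exactly when $\sigma_i^2-\sigma_i-1\ge0$, i.e.\ $\sigma_i\ge(1+\sqrt5)/2$.

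\textbf{The small-eigenvalue gap.} For small positive $\sigma_i$ the inequality genuinely fails. A one-dimensional example with $\mathbf{J}=\epsilon$ shows this: the left side is $\delta^2/\epsilon^2$, while the right side is only $\delta^2\epsilon^2/(\epsilon^2+1)$. So, as worded, the stated bound requires the nonzero spectrum of $\mathbf{J}\mathbf{J}^\top$ to lie in $[(1+\sqrt5)/2,\infty)$.

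I would state this as the precise condition under which the theorem holds and then finish the proof under it. I would argue that it is the relevant regime for the layers the paper analyzes, e.g.\ linear maps whose squared singular values are bounded away from zero. This is the most I can establish without modifying the theorem's hypotheses, because the exact relation $\delta=\Delta\mathbf{J}$ with small $\sigma_i$ forces large, not small, $\|\Delta\|_2$.
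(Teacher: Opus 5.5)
Your analysis is correct, and so is your conclusion: if $\Delta$ is a genuine solution of $\delta=\Delta\mathbf{J}$, the statement is false, so there is no valid proof of it that you are missing. Your Step~4 identity $\|\Delta\|_2^2=\sum_{\sigma_i>0}(\|\delta\mathbf{J}^\top\|_2\cos\theta_i)^2/\sigma_i^2$ is right. Choosing $\Delta=\mu_k$ gives left side $1$ against right side $\sigma_k^2/(\sigma_k+1)$. So the bound fails for every eigenvalue $\sigma_k<(1+\sqrt5)/2$, including $\sigma_k=0$, where $\delta=0$. It fails even for $\mathbf{J}=\mathbf{I}$, where it reads $\|\Delta\|_2^2\le\frac{1}{2}\|\Delta\|_2^2$.

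The paper's proof starts as yours does: it reads $\delta\mathbf{J}$ as $\delta\mathbf{J}^\top$, forms the normal equation $\Delta\mathbf{J}\mathbf{J}^\top=\delta\mathbf{J}^\top$, and expands in the eigenbasis. It reaches the bound only through three errors.
\begin{itemize}
\item In the full-rank case it writes $\|\delta\mathbf{J}^\top(\mathbf{J}\mathbf{J}^\top)^{-1}\|_2^2=\sum_i(\delta\mathbf{J}^\top\mu_i^\top)^2/\sigma_i$, dropping a power of $\sigma_i$. Your $\sigma_i^2$ is the correct denominator.
\item In the rank-deficient case it replaces $\Delta$ by the ridge solution $\delta\mathbf{J}^\top(\mathbf{J}\mathbf{J}^\top+\mathbf{I})^{-1}$, which does not satisfy $\delta=\Delta\mathbf{J}$, and it drops the same power.
\item It merges the two cases by asserting that the regularized expression is the larger one because $\sqrt{\sigma_i+1}$ is smaller than $\sqrt{\sigma_i}$. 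That is backwards.
\end{itemize}

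Two comments on your repair.

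First, the hypothesis itself is exactly right. Requiring every nonzero $\sigma_i\ge(1+\sqrt5)/2$, together with the min-norm choice of $\Delta$, is sufficient by your termwise comparison and necessary by the $\Delta=\mu_k$ example. Your claim that this is ``the relevant regime'' does not hold up, though:
\begin{itemize}
\item ``bounded away from zero'' is not enough;
\item isometries ($\mathbf{J}\mathbf{J}^\top=\mathbf{I}$, so all $\sigma_i=1$) violate it;
\item an element-wise sigmoid, which the paper singles out, has a diagonal Jacobian with entries at most $1/4$, so all its $\sigma_i\le 1/16$.
\end{itemize}

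Second, there is a repair that holds for every $\mathbf{J}$ and matches where the paper's $\sqrt{\sigma_i+1}$ comes from. Let $\Delta$ denote the ridge reconstruction $\hat\Delta=\delta\mathbf{J}^\top(\mathbf{J}\mathbf{J}^\top+\mathbf{I})^{-1}$. Your coordinate computation then gives $\|\hat\Delta\|_2^2=\sum_i(\|\delta\mathbf{J}^\top\|_2\cos\theta_i)^2/(\sigma_i+1)^2$. This is at most the stated right-hand side because $\sigma_i+1\ge 1$; the paper's dropped power happens to err in the safe direction here. The cost is that this bounds a Tikhonov-regularized estimate, not the true input deviation. For fixed $\delta$, the true deviation blows up as a singular value of $\mathbf{J}$ shrinks, while the corresponding term of the stated right-hand side tends to zero.
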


\begin{figure}[!t]
    \centering
    \includegraphics[width=0.7\linewidth]{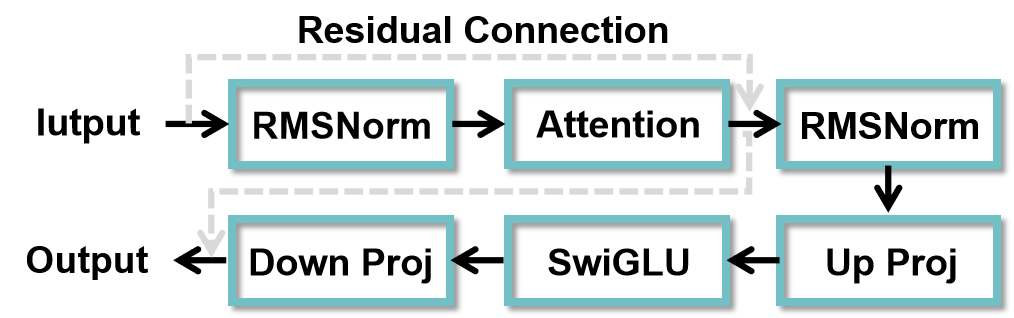}
    \caption{The common components of a single block within modern LLMs.}
    \label{sec4_fig1}
\end{figure}

\begin{figure*}[!ht]
    \centering
    \subfloat[Expected PAF\label{fig_qwen_avg}]{
        \includegraphics[width=0.49\linewidth]{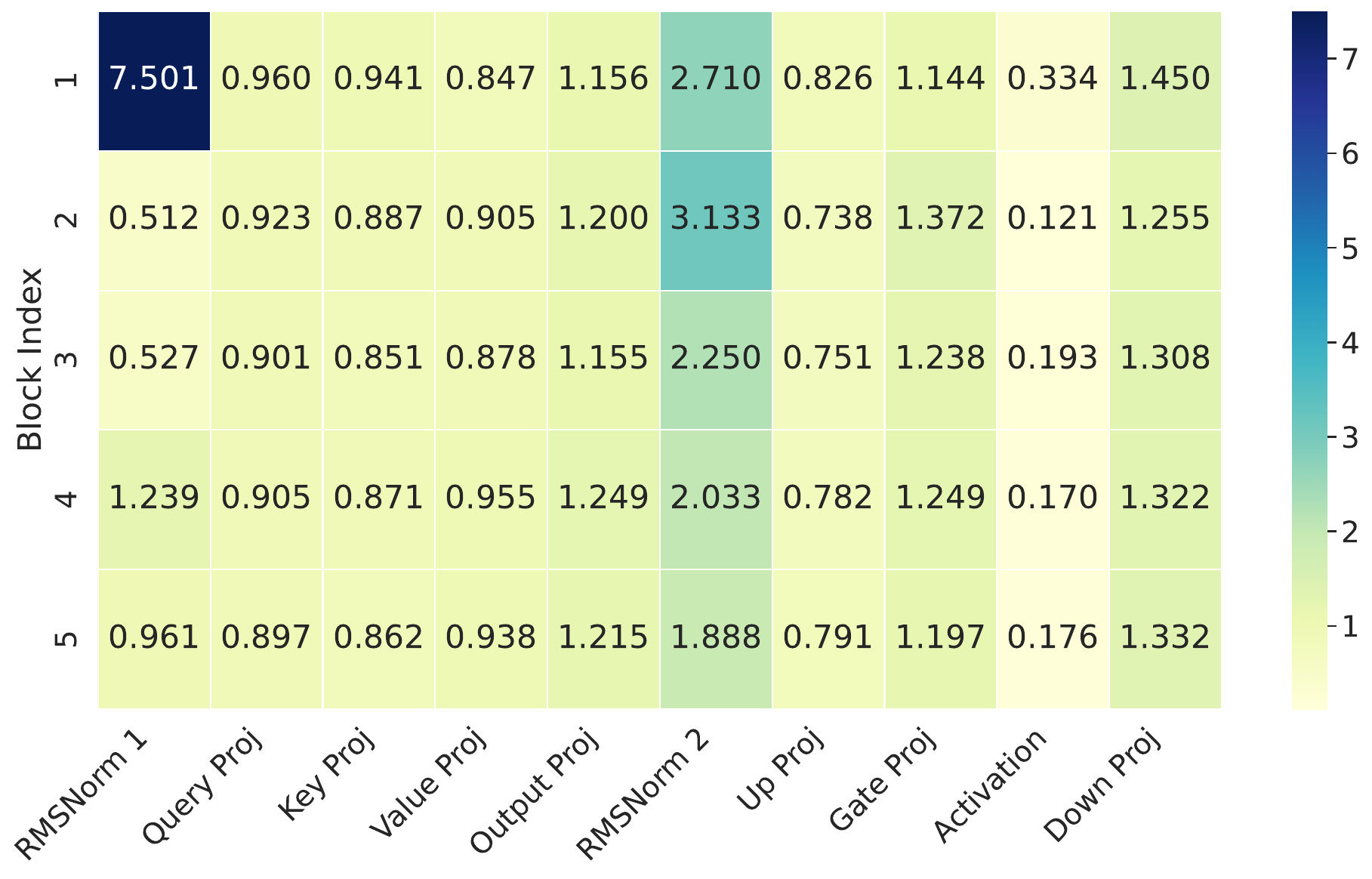}
    }
    \hfill
    \subfloat[Max-PAF\label{fig_qwen_worst}]{
        \includegraphics[width=0.49\linewidth]{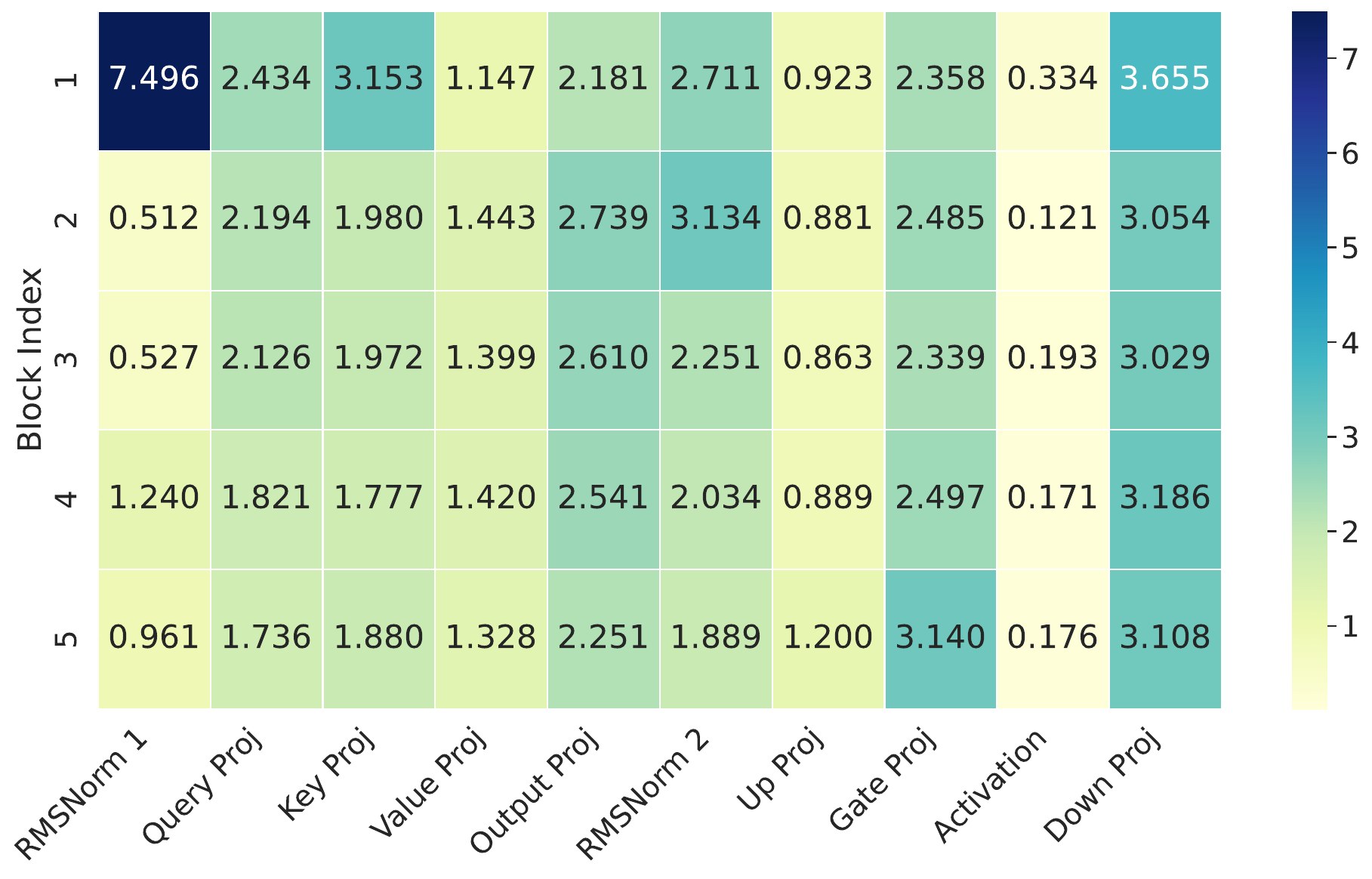}
    }
    \caption{Comparison of different layers' sensitivity in Qwen3-0.6B. The expected PAF values capture the average amplification across random perturbations, and highest PAF represents the maximum possible amplification.}
    \label{fig_qwen}
\end{figure*}

\begin{figure*}[!ht]
    \centering
    \subfloat[Expected PAF\label{fig_fal_avg}]{
        \includegraphics[width=0.49\linewidth]{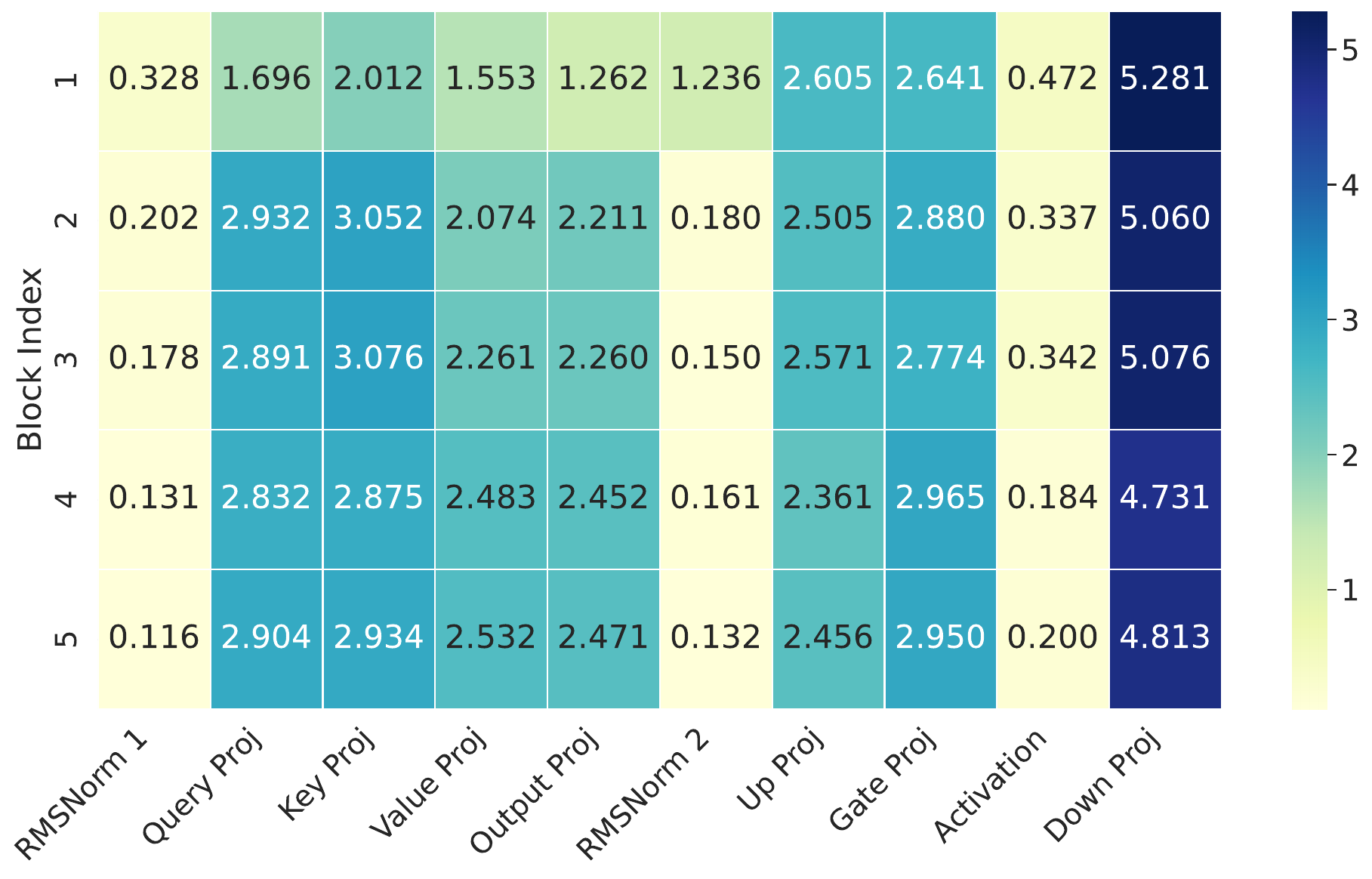}
    }
    \hfill
    \subfloat[Max-PAF\label{fig_fal_worst}]{
        \includegraphics[width=0.49\linewidth]{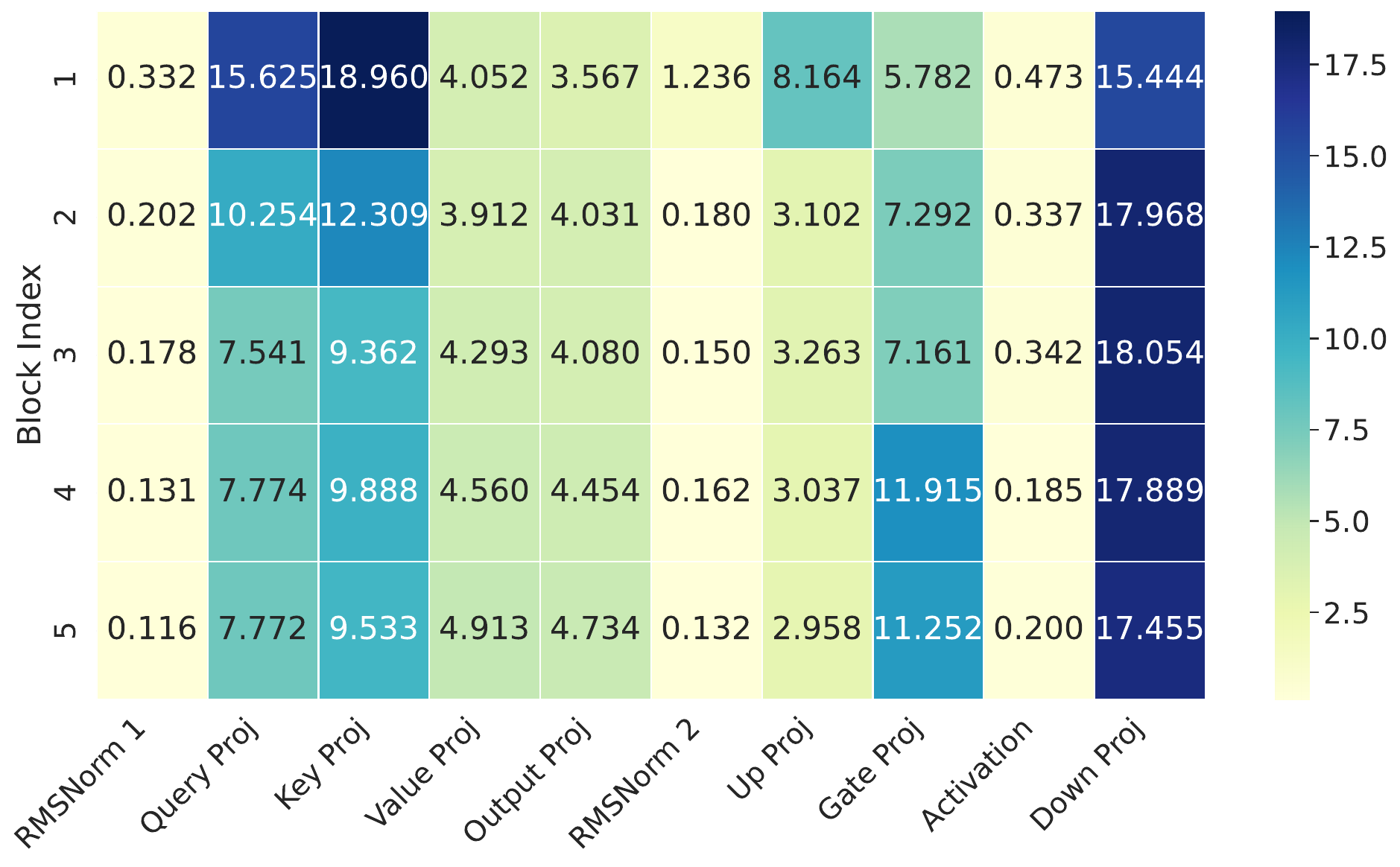}
    }
    \caption{Comparison of different layers' sensitivity in Falcon3-1B.}
    \label{fig_fal}
\end{figure*}

The proof of Theorem \ref{thm_linear} is provided in Appendix \ref{proof1}.
Leveraging Equation \ref{sec4_1_eq2}, we can define a quantitative measure for a layer's sensitivity.
Applying Cauchy–Schwarz inequality and dividing by $\|\delta\|_2^2$ yields:
$
\frac{||\Delta||_2^2}{||\delta||_2^2} \leq \sum_{i=1}^n \left( \frac{||\mathbf{J}||_2 \cos \theta_i }{\sqrt{\sigma_i+1}} \right)^2.
$
In light of this, we define Perturbation Amplification Factor (PAF) as a metric for a layer's sensitivity:\footnote{We opt for the absolute value instead of the squared value due to its better numerical stability observed during our experiments.}:
\begin{equation}
\label{eq_PAF}
    \text{PAF} := \mathbb{E}_{\delta \sim \mathcal{N}(\mathbf{0}, \mathbf{I})} \sum_i \left| \frac{|| \mathbf{J}||_2 \cos \theta_i }{\sqrt{\sigma_i+1 }}  \right|,
\end{equation}
where the expectation is taken over $\delta$ drawn from an isotropic Gaussian distribution $\mathcal{N}(\mathbf{0}, \mathbf{I})$.
A higher PAF value indicates that output perturbations are more effective at causing substantial input deviations, signifying a greater intrinsic resistance to reconstruction.
The next subsection empirically evaluates the PAF values for different layers within an LLM.

\subsection{Empirical Evaluation}
\label{sec_4_2}

\textbf{Setup.}
We employ four LLMs, including Qwen3-0.6B, Falcon3-1B, Llama-3.2-1B~\cite{llama3}, and SmolLM2-1.7B~\cite{SmolLM2}, and evaluate their PAF values for distinct layers.
Figure \ref{sec4_fig1} illustrates the main components of a Qwen3 block, which consists of two RMSNorm layers, an MHA layer, and an FFN layer.
The MHA sub-layer is further composed of query, key, value, and output projection layers, while the FFN layer comprises up-projection, gate projection, sigmoid activation, and down-projection layers\footnote{Gate Projection and sigmoid function together form SwiGLU activation function, a standard component in modern LLMs. While SwiGLU is the de facto activation, a common convention in the LLM research community is to analyze the Gate Projection and sigmoid component separately, often referring to the latter as Activation. We adhere to this established practice.}.
Residual connections are excluded from this analysis, as their identity mapping inherently yields a sensitivity of 1.
Because $\mathbf{J}$ is input-dependent, we evaluate it on a random subset of 100 AlpacaEval samples.
For each Jacobian, we employ a stochastic Monte Carlo estimation to compute the PAF values.
The resulting estimates are generally stable, with typical standard deviations below 0.01.

\textbf{Layer-wise sensitivity analysis.}
Figures \ref{fig_qwen}(a), \ref{fig_fal}(a),  \ref{fig_jac_lamma}(a), and \ref{fig_jac_smollm}(a) (the latter two in Appendix \ref{appendix_paf_llama_smo}) present the PAF values for layers within the first five blocks, leading to three observations.
\begin{itemize}[leftmargin=*,topsep=1pt]
    \item \textbf{Wide variation across layers and models.}
    In Falcon3-1B, PAF spans from $0.116$ (very vulnerable) to $5.281$ (strongly resistant).
    Even identical layer types behave differently across architectures, for example, the first-block RMSNorm yields PAF $7.501$ in Qwen3-0.6B but only $0.328$ in Falcon3-1B.
    \item \textbf{Block-level consistency.}
    Within a given model, layers of the same type exhibit relatively stable PAF values across blocks—except for the first block, which sometimes diverges.
    We hypothesize this discrepancy is due to the nature of their inputs: the first block operates directly on token embeddings, whereas the subsequent blocks process the outputs of the preceding blocks.
    \item \textbf{Activation layers as weak links.}
    Non-linear activations (e.g., sigmoid) show consistently low PAF values across all models, marking them as particularly leakage-prone components. \revise{This finding is non-trivial, as prior work often assumes that the strong nonlinearity of LLMs makes prompt inversion inherently difficult, thereby necessitating sophisticated attack techniques. In contrast, our empirical results show that these nonlinear components do not substantially amplify reconstruction errors and instead facilitate prompt inversion to succeed, as they fail to amplify errors during backpropagation. This is also intuitively plausible: the nonlinearity of LLMs confers a degree of error tolerance that supports generalization, whereas excessive sensitivity to small activation perturbations would suggest overfitting.}
\end{itemize}

\begin{table}[!t]
\small
\centering
\caption{Pearson correlation coefficient ($R \in [0,1]$) between PAF and ROUGE-L.}
\label{tab_pearson}
\begin{tabular}{c|cc}
\hline\hline
Model &  Qwen3-0.6B      & Falcon3-1B      \\ 
R     & -0.8125 & -0.7820 \\ \hline
Model &  Llama-3.2-1B      &  SmolLM2-1.7B      \\  
R     & -0.6750 & -0.5905 \\ \hline\hline
\end{tabular}
\end{table}

\textbf{Confirming PAF with a sanity check.}
To further validate the correlation between high PAF values and reconstruction resilience, we conduct a controlled experiment.
For every client-side layer within the first five blocks, we build a bypassed model that replaces this layer with the identity while keeping the rest of the network intact.
We then re-evaluate \attack's performance on each variant.
Bypassing a high-PAF layer would lead to a more successful attack, as the attacker would face less amplified noise and a more direct mapping between the input and the intermediate activations.
Conversely, bypassing a low-PAF layer should have a small impact on the attack's performance.
Table \ref{tab_pearson} reports the Pearson correlation coefficients between the ROUGE-L achieved by \attack and the PAF values for each bypassed layer.
The numbers validate our point, showing a significant drop in reconstruction quality when low-PAF layers are bypassed.
We stress that, the sensitivity analysis is not only about why noise defenses fail but also about why inversion succeeds.
Specifically, in low-PAF regions, both random noise and residual optimization errors fail to significantly perturb the underlying inputs, so minimizing activation distance yields an accurate reconstruction.

\subsection{Towards Potential Defenses}

\revise{The PAF analysis is useful not only for explaining the success of \attack but also for guiding defense design.
In particular, it suggests two levers for improving privacy: (i) components whose mere presence amplifies leakage and (ii) perturbation directions. 
We conduct a set of pilot experiments to gauge how much protection one can obtain by (i) architectural modification and (ii) adversarial noise injection.}

\textbf{Architectural modification.}
Given that certain layers (e.g., activation function) have consistently low PAF, a natural idea is to remove these components.
We experiment with replacing the activation function with an identity mapping.
However, this led to a complete collapse in the model's overall performance, where the model would repeatedly output meaningless characters.
This is likely because activation layers contribute to the model's essential non-linearity, and their removal renders the model incapable of extracting useful representations.
We observe a similar phenomenon when removing the low-PAF RMSNorm layers in the Falcon3-1B and SmolLM2-1.7B models.
We have not yet found an effective way to implement this defense strategy without significant performance degradation.
\revise{We leave this direction as a focus for future research, with a potential approach being to replace the identity mapping with richer non-linear functions that preserve some of the layer's privacy-enhancing properties.}

\textbf{Adversarial noise injection.}
Theorem~\ref{thm_linear} clarifies why naive noise defenses are weak, and how to improve them:
\begin{itemize}[leftmargin=*,topsep=1pt]
    \item \textbf{Magnitude dependence.} 
    A smaller output perturbation $||\delta||_2$ leads to a smaller input deviation $||\Delta||_2$.
    Similarly, a smaller norm of the Jacobian matrix, $||\mathbf{J}||_2$, reduces the responsiveness of $||\Delta||_2$ to changes in $||\delta||_2$.
    \item \textbf{Directional sensitivity.}
    For fixed magnitudes of $||\delta||_2$ and $||\mathbf{J}||_2$, the defensive effect varies by directions. 
    The reconstruction error is most sensitive to perturbations $\delta$ (when scaled by $\mathbf{J}^\top$) aligned with the direction corresponding to the largest eigenvalue ($\sigma_i$) of $\mathbf{J} \mathbf{J}^\top$.
    Injecting noise along these high-sensitivity directions could be far more effective than injecting random noise of the same magnitude.
\end{itemize}
This suggests a defense that injects noise specifically along a layer's most sensitive direction, in addition to simply increasing the magnitude of $\delta$.
To assess the potential of such a strategy, we evaluate max-PAF which quantifies the amplification when the perturbation is intentionally aligned with the layer's most sensitive direction.
If a layer's Max-PAF is significantly higher than its average PAF (derived from random noise), then adversarial noise can yield substantially stronger privacy protection for the same noise norm.

As shown in Figures \ref{fig_qwen}(b), \ref{fig_fal}(b),  \ref{fig_jac_lamma}(b), and \ref{fig_jac_smollm}(b), for many layers, max-PAF is significantly higher than the expected PAF. 
For instance, in Qwen3-0.6B, for the query projection layer, the PAF surges from an average of about 0.9 to a maximum of 2.434.
Similarly, the down-projection layer's PAF increases from about 1.3 to 3.
These substantial gaps confirm that adversarial noise is a promising research direction, which we explore further in Section \ref{sec5}.

Moreover, we observe that for some layers, the Max-PAF offers little improvement over the average PAF.
First, RMS normalization divides the input by its root-mean-square and then applies a fixed per-coordinate scale.
The isotropic nature of the operation yields a Jacobian with equal singular values and no preferred axis, rendering directional noise ineffective.
Second, like RMSNorm, Sigmoid function acts element-wise.
In practice, its inputs often fall in the saturation plateau where the derivative is near zero.
Once the gradient vanishes, perturbations are suppressed regardless of orientation, so Max-PAF collapses to the baseline.

\section{Defense Design and Evaluation}
\label{sec5}

\subsection{Defense Model}

We consider a scenario where the client is privacy-aware and proactively deploys countermeasures against \attack.

\textbf{Defender's goal.}
The client’s objective is to prevent an honest-but-curious server from reconstructing its sensitive input prompt, while preserving the quality of the model’s predictions as much as possible.
Furthermore, the defense mechanism should be computationally efficient to avoid introducing a substantial overhead on the client, given that these are typically resource-constrained devices.

\textbf{Defender's knowledge and ability.}
We assume the client has full control over the on-device submodel $F_C$ and can arbitrarily post-process its intermediate activations before sending them to the server.
The client has moderate compute to run simple, per-query defenses.
The server-side model remains a black box, reflecting real-world deployments where providers treat these components as proprietary.

\subsection{Our Defense: \defense}

\textbf{Overview.}
At a high level, \defense operates by injecting adversarial perturbations into the client-side activations before transmission.
The key insight is that some directions in activation space cause much larger reconstruction errors at the input than others.
This approach distinguishes \defense from naive defenses such as random noise injection or coarse sparsification, which often incur unnecessary utility loss without offering strong privacy guarantees.
We first formulate the problem as a constrained optimization problem, and then derive a practical solution.

\textbf{Problem statement.}
We instantiate $f$ for the entire client-side model, $F_C$, rather than focusing on a single layer.
To protect privacy, we maximize the distance $\Delta$ between the original input embedding, $\mathbf{z}$, and the reconstructed input embedding, $\mathbf{\hat{z}}$.
Concurrently, the injected perturbation is required to be as small as possible, because a large perturbation $\delta$ intuitively can significantly alter the activation and the inference results\footnote{The local smoothness of neural network mappings states that smaller perturbations are less impactful on model outputs \cite{convex_optim}.}.
This leads to the following constrained optimization problem:
\begin{equation}
\begin{split}
\label{eq_opt_problem}
\delta = \underset{\delta}{\arg \max} || \Delta ||_p, \ \ s.t., || \delta ||_q \leq \mu,
\end{split}
\end{equation}
where $\mu$ is the perturbation budget.

\begin{theorem}
\label{thm:bound}
(The proof can be found in Appendix~\ref{appendix_theorem_2_proof}.) The solution of Equation \ref{eq_opt_problem} satisfies: 
$
\|\Delta\|_q \ge \frac{\mu}{C},
$
where $C = \max_{t\in[0,1]} \|\nabla F_C(z + t\Delta)\|_q$.
Furthermore, let $\mathcal{E} = \{e_1,\dots,e_V\}$ be the set of token embeddings, and let $d_{\min} = \min_{e \in \mathcal{E},\, e \neq z} \|z - e\|_q$.
If $\mu > \frac{C d_{\min}}{2}$, then the true token cannot be recovered via nearest neighbor search under the $q$-norm\footnote{For cosine similarity, one can map the embeddings onto a unit sphere and consider angular distance. Then, a similar guarantee can be derived by replacing $q$-norm with angular distance.}.

\end{theorem}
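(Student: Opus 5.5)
The plan is to use the defining relation from Section~\ref{sec4}: the reconstructed embedding $\hat{\mathbf{z}} = \mathbf{z} + \Delta$ satisfies $F_C(\mathbf{z}+\Delta) = F_C(\mathbf{z}) + \delta$, as in Equation~\ref{sec4_1_eq1} with $f = F_C$. We then control $\|\delta\|_q$ by $\|\Delta\|_q$ using the mean value theorem in integral form. This yields a lower bound on $\Delta$ for the optimal perturbation, which is then converted into a statement about nearest-neighbor decoding via the triangle inequality.

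\textbf{Step 1: lower bound on $\|\Delta\|_q$.} Since $F_C$ is differentiable (the paper notes all its layers are), write
\[
\delta = F_C(\mathbf{z}+\Delta) - F_C(\mathbf{z}) = \int_0^1 \Delta\, \nabla F_C(\mathbf{z}+t\Delta)\, dt .
\]
Taking $q$-norms and using the induced operator norm gives
\[
\|\delta\|_q \le \int_0^1 \|\nabla F_C(\mathbf{z}+t\Delta)\|_q \,\|\Delta\|_q\, dt \le C\|\Delta\|_q ,
\]
with $C$ as defined in the statement; the maximum exists because the segment is compact and the Jacobian is continuous. It remains to show that the maximizer of Equation~\ref{eq_opt_problem} saturates the budget, i.e.\ $\|\delta\|_q = \mu$. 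This follows from a scaling argument: if $\|\delta\|_q < \mu$, we can enlarge $\delta$ along its direction and, at least locally, increase $\|\Delta\|$. Under that saturation, $\|\Delta\|_q \ge \mu/C$.

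This step is the main obstacle, for two reasons. First, the saturation claim is not automatic for a nonlinear $F_C$. Second, the objective uses the $p$-norm while the bound is in the $q$-norm. I would handle both by the mild reading that the optimum is taken over the feasible set with $\|\delta\|_q = \mu$ (equivalently, $\Delta$ is monotone in the budget), and by stating the bound for $p = q$, noting that norm equivalence gives the general case up to dimension constants. If saturation is not available, the cleanest fallback is to state the result for any $\delta$ with $\|\delta\|_q = \mu$, which is what the defense actually injects.

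\textbf{Step 2: failure of nearest-neighbor recovery.} Assume $\mu > C d_{\min}/2$. Then Step 1 gives $\|\hat{\mathbf{z}} - \mathbf{z}\|_q = \|\Delta\|_q > d_{\min}/2$. Let $e^\star$ attain $d_{\min}$. To conclude that the decoder does not return $\mathbf{z}$, we need $\|\hat{\mathbf{z}} - e^\star\|_q < \|\hat{\mathbf{z}} - \mathbf{z}\|_q$. The bound $\|\Delta\|_q > d_{\min}/2$ means $\hat{\mathbf{z}}$ lies outside the ball of radius $d_{\min}/2$ around $\mathbf{z}$. That ball is exactly the region in which the true token is guaranteed to be the unique nearest neighbor, since by the triangle inequality every other embedding is at distance greater than $d_{\min}/2$ from such points.

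Strictly, leaving this ball only removes the guarantee of recovery rather than forcing an error. I would therefore make one of two choices. The first is to phrase the conclusion as ``recovery cannot be certified''. The second is to use the adversarial freedom in Equation~\ref{eq_opt_problem}: the maximizer pushes $\Delta$ in the worst direction, so we may choose it toward $e^\star$. Then $\hat{\mathbf{z}} = \mathbf{z} + s(e^\star - \mathbf{z})/\|e^\star-\mathbf{z}\|_q$ with $s > d_{\min}/2$ is strictly closer to $e^\star$ than to $\mathbf{z}$, and the decoder outputs a wrong token.

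For the footnote, the same two steps go through verbatim after normalizing embeddings onto the unit sphere and replacing $\|\cdot\|_q$ with the angular metric, which also satisfies the triangle inequality.
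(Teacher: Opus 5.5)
\textbf{Step 1.} Your Step 1 is the paper's own argument: the integral mean value theorem plus consistency of the induced $q$-norm, giving $\|\delta\|_q\le C\|\Delta\|_q$. You are right that this yields nothing without $\|\delta\|_q=\mu$. The appendix simply writes ``since $\|\delta\|_q\le\mu$, we obtain $\mu\le C\|\Delta\|_q$'', which is a non sequitur.

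Your scaling justification does not close the gap, however. For nonlinear $F_C$, enlarging $\delta$ along its own direction need not increase $\|\Delta\|_p$, even locally, because the curve $t\mapsto F_C^{-1}(\mathbf{h}_{Q_1}+t\delta)$ can bend back toward $\mathbf{z}$. The missing idea is to perturb in input space instead. The set $\{\hat{\mathbf{z}}:\|F_C(\hat{\mathbf{z}})-\mathbf{h}_{Q_1}\|_q<\mu\}$ is open because $F_C$ is continuous. So if an optimal $\delta$ had $\|\delta\|_q<\mu$, then for small $\epsilon>0$ the point $\hat{\mathbf{z}}+\epsilon\Delta$ (or $\mathbf{z}+\epsilon\mathbf{w}$ with $\mathbf{w}\neq\mathbf{0}$ if $\Delta=\mathbf{0}$) would still be the reconstruction of a feasible perturbation, with strictly larger $\|\Delta\|_p$. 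The reconstruction is unique by the injectivity the paper presupposes when it writes $F_C^{-1}$. This works for every $p$, so your $p$-versus-$q$ concern disappears: once $\|\delta\|_q=\mu$, the bound $\|\Delta\|_q\ge\mu/C$ follows directly, with no norm-equivalence constants and no restriction to $p=q$.

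\textbf{Step 2.} Your suspicion is correct, and no argument can deliver the statement, because its second claim is false. The appendix reaches it only by reversing the triangle inequality: it derives $\|\hat{\mathbf{z}}-\mathbf{e}_{\min}\|_q\ge d_{\min}-\|\Delta\|_q$ and two lines later uses it with ``$\le$''. Its case $\|\Delta\|_q>d_{\min}$ is equally unsupported.

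For a counterexample, take $p=q=2$, $\mathcal{E}=\{\mathbf{z},\mathbf{e}^\star\}$ with $\mathbf{z}=\mathbf{0}$, $\mathbf{e}^\star=(d,0)$, and $F_C(\mathbf{x})=\mathbf{x}\,\mathrm{diag}(1,\epsilon)$ with $0<\epsilon<1$. Then $C=1$, and the solution of Equation~\ref{eq_opt_problem} is $\delta=(0,\pm\mu)$, unique up to sign. The resulting $\Delta=(0,\pm\mu/\epsilon)$ is orthogonal to $\mathbf{e}^\star-\mathbf{z}$, so $\mathbf{z}$ remains the nearest neighbor for every $\mu$, in particular whenever $\mu>Cd_{\min}/2=d/2$.

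The same example rules out your second option: the maximizer's direction is dictated by $F_C$ and cannot be steered toward $\mathbf{e}^\star$. That option also reuses $s>d_{\min}/2$ from Step 1, which was obtained only along the maximizer's segment with its constant $C$.

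What is provable uses the Lipschitz estimate in the opposite role, as a bound on the cost of a chosen displacement. Let $\mathbf{u}=(\mathbf{e}^\star-\mathbf{z})/d_{\min}$, $\delta:=F_C(\mathbf{z}+s\mathbf{u})-F_C(\mathbf{z})$, and let $C_{\mathbf{u}}$ be the maximal Jacobian norm on the segment from $\mathbf{z}$ to $\mathbf{e}^\star$. The same integral bound gives $\|\delta\|_q\le C_{\mathbf{u}}s$. So if $\mu>C_{\mathbf{u}}d_{\min}/2$, any $s\in(d_{\min}/2,\min\{d_{\min},\mu/C_{\mathbf{u}}\}]$ is within budget, and for injective $F_C$ nearest-neighbor decoding returns a wrong token. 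That existence statement, or your weaker ``recovery cannot be certified'', is what can actually be proved. The claim about the solution of Equation~\ref{eq_opt_problem} with the maximizer's $C$ cannot.
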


Theorem~\ref{thm:bound} guarantees that any perturbation within the budget induces a reconstruction error of at least $\mu/C$.
This lower bound is independent of the adversary’s reconstruction strategy.
In particular, even a perfect inversion of $F_C$ cannot reduce the error below $\mu/C$.
Theorem~\ref{thm:bound} further implies that the true token embedding is obscured in a nearest-neighbor sense.
Intuitively, more distinctive tokens require larger perturbations to achieve the same level of protection.
This helps explain why reconstruction is easier on iCliniq than on AlpacaEval, because iCliniq includes many domain-specific terms with relatively isolated embeddings.

\textbf{Solution.}
To solve Equation \ref{eq_opt_problem}, we need to express the input deviation $\Delta$ in terms of the perturbation $\delta$.
Given $\mathbf{\hat{z}} = F_C^{-1}(\mathbf{h}_{Q_1} + \delta)$ and $\mathbf{z} = F_C^{-1}(\mathbf{h}_{Q_1})$, the path integral~\cite{convex_optim} yields:
\begin{equation}
\nonumber
|| \Delta ||_p = || \mathbf{\hat{z}} - \mathbf{z} ||_p = \left\| \int_0^1 (\mathbf{\hat{z}} - \mathbf{z}) \nabla F_C^{-1}(\mathbf{z} + t(\mathbf{\hat{z}} - \mathbf{z})) dt \right\|_p.
\end{equation}
In practice, this integral is approximated numerically.
A finer discretization of $t \in [0,1]$ yields a more accurate approximation but increases computational cost.
To keep the defense efficient, we adopt a two-point trapezoidal approximation\footnote{For a detailed computational procedure, see \cite{num_analysis}.}:
\begin{equation}
|| \Delta ||_p = \left\| (\mathbf{\hat{z}} - \mathbf{z}) \left(\nabla F_C^{-1}(\mathbf{z}) + \nabla F_C^{-1}(\mathbf{\hat{z}}) \right) / 2 \right\|_p.
\end{equation}
Empirically, we find that this two-step approximation works well.
Further increasing the number of steps did not lead to a significant improvement in the performance of \defense (Section \ref{sec5_res}).
We adopt this approximation as the default for the remainder of this paper.

The solution to this problem is a perturbation that aligns with the most sensitive directions of the inverse mapping, which, as we defined in Section \ref{sec4}, correspond to the least sensitive directions of the forward pass.
The specific form of the solution for $\delta$ depends on the choice of norms $p$ and $q$.
We derive analytical solutions for common values of $p$ and $q$:
\begin{itemize}[leftmargin=*,topsep=1pt]
    \item $p=q=2$ ($L_2$-norm with $L_2$ constraint): The optimal perturbation $\delta$ is the right singular vector of the inverse Jacobian corresponding to its largest singular value, scaled by the budget $\mu$. This can be expressed as $\delta = \mu \cdot \mathbf{v}_{\max}$, where $\mathbf{v}_{\max}$ is the right singular vector of $\left(\nabla F_C^{-1}(\mathbf{z}) + \nabla F_C^{-1}(\mathbf{\hat{z}})\right)$ corresponding to its largest singular value.
    \item $p=2, q=0$ ($L_2$-norm with $L_0$ constraint): The optimal solution is to identify the top-$k\%$ elements in the vector $\delta \left(\nabla F_C^{-1}(\mathbf{z}) + \nabla F_C^{-1}(\mathbf{\hat{z}})\right)$ based on their magnitude rankings, and subsequently retain only the corresponding elements in $h_{Q_1}$ while zeroing out all remaining components. Here, $(1-k)\%$ represents the sparsity ratio that governs the degree of pruning applied to the intermediate representations. We here only present the final perturbed intermediate activation rather than $\delta$ itself for brevity.
\end{itemize}
In our initial empirical evaluations, the solution with the $L_0$ constraint consistently outperforms the solution with the $L_2$ constraint\footnote{The scores of ROGUE-L differ by approximately 0.05 to 0.1 when using Qwen3-0.6B and AlpacaEval under similar utility level.}.
We hypothesize that this performance gap arises because small $L_2$-bounded perturbations tend to jitter magnitudes while preserving signs.
Such sign information can itself be exploited by adversaries to infer semantic properties of the original input~\cite{YueJWBD23}.
Unless otherwise specified, all subsequent evaluations therefore adopt the solution with $L_0$, which has the additional advantage of lowering the communication cost between client and server due to its sparse nature.

\subsection{Evaluation Setup}

\textbf{Attack.}
We evaluate \defense against \attack using the same attack hyperparameters as in Section~\ref{sec3}.
Note that it is standard in security to consider adaptive attackers.
A natural adaptive strategy here would be to ignore zero entries when matching activations, so as not to be misled by sparsity.
Since \attack uses cosine distance, which is inherently insensitive to zero-valued dimensions, it already behaves as such an adaptive attack. 
This is also why we chose cosine distance over $L_2$ distance in our attack design.

\noindent
\textbf{Metrics. }
We evaluate \defense from three primary perspectives: privacy, utility, computational overhead.
For privacy metrics, we include Precision, Recall, and ROUGE-L to measure the similarity between the reconstructed query and the original query.
For utility assessment, we adopt the LLMs-as-judge~\cite{llm_judge}, a state-of-the-art approach for evaluating the quality of LLMs' responses.
Specifically, we generate responses to user queries under varying defense methods and intensities, then submit query-response pairs to GPT-4o for comprehensive quality assessment, including correctness, completeness, relevance, clarity, and style.
The judge model prompt is included in Figure~\ref{fig:judge_prompt}.

\begin{figure}[!h]
\centering

\begin{tcolorbox}[enhanced,
  colback=gray!3,
  colframe=gray!60!black,
  title={\textbf{Judge Model Prompt}},
  arc=2mm, boxrule=0.5pt,
  leftrule=0pt, rightrule=0pt,
  toprule=0.5pt, bottomrule=0.5pt,
  width=0.95\linewidth]

\small{\textbf{Task.} You are a meticulous reviewer. You will receive a \emph{question–answer} pair. Carefully read the pair and evaluate the quality of the answer with respect to the question.

\textbf{For each of the five criteria below, assign an integer score from 0 (poor) to 4 (excellent):}
\begin{enumerate}[leftmargin=*,topsep=1pt] 
  \item \textbf{Accuracy \& Correctness:} Are the facts accurate? Are any key claims unsupported or incorrect?
  \item \textbf{Completeness \& Depth:} Does the answer fully address all aspects of the question? Does it provide meaningful detail and insight?
  \item \textbf{Relevance:} How directly does the answer respond to the question without digression or omission?
  \item \textbf{Clarity \& Organization:} Is the explanation clear, logically structured, and free of ambiguity?
  \item \textbf{Style \& Tone:} Is the language fluent, professional, and appropriate for the intended audience?
\end{enumerate}

\textbf{Scoring Scale (per dimension)}:
\begin{center}
\begin{tabular}{cl}
\toprule
\textbf{Score} & \textbf{Descriptor} \\
\midrule
4 & Excellent (Exceeds expectations) \\
3 & Good (Meets all requirements) \\
2 & Fair (Adequate with minor flaws) \\
1 & Poor (Significant deficiencies) \\
0 & Unacceptable (Fundamentally flawed) \\
\bottomrule
\end{tabular}
\end{center}}

\end{tcolorbox}

\caption{Evaluation rubric used by the judge model.}
\label{fig:judge_prompt}

\end{figure}

\begin{table*}[!th]
\centering
\caption{The \defense's performance against \attack in Qwen3-0.6B. Results are presented as mean $\pm$ std.}
\label{tab_defense_our}
\small
\begin{tabular}{c|c|cc|cc|cc}
\hline \hline
{Dataset} & Sparsification Ratio & \multicolumn{2}{c|}{Precision} & \multicolumn{2}{c|}{Recall} & \multicolumn{2}{c}{ROUGE-L} \\
       &             & Qwen3-0.6B & Falcon3-1B & Qwen3-0.6B & Falcon3-1B & Qwen3-0.6B & Falcon3-1B \\ \hline
\multirow{5}{*}{\rotatebox{90}{AlpacaEval}}
    & 0.1 & \result{88.64}{7.05} & \result{\textbf{94.54}}{3.22} & \result{79.67}{7.24} & \result{\textbf{89.35}}{4.77} & \result{\textbf{0.89}}{0.05} & \result{{0.89}}{0.07} \\
    & 0.3 & \result{59.88}{10.08} & \result{\textbf{82.89}}{5.91} & \result{50.70}{7.81} & \result{\textbf{72.59}}{6.73} & \result{0.68}{0.08} & \result{\textbf{0.70}}{0.10} \\
    & 0.5 & \result{37.26}{11.86} & \result{\textbf{58.52}}{9.98} & \result{30.70}{7.48} & \result{\textbf{45.75}}{7.49} & \result{\textbf{0.45}}{0.09} & \result{0.40}{0.11} \\
    & 0.7 & \result{17.95}{9.37} & \result{\textbf{32.60}}{9.55} & \result{14.51}{5.57} & \result{\textbf{21.45}}{6.03} & \result{\textbf{0.24}}{0.08} & \result{0.16}{0.07} \\
    & 0.9 & \result{3.45}{4.50}  & \result{\textbf{7.06}}{7.19}  & \result{2.44}{2.74}  & \result{\textbf{3.55}}{3.20}  & \result{\textbf{0.05}}{0.05} & \result{0.02}{0.03} \\ \hline
\multirow{5}{*}{\rotatebox{90}{iCliniq}}
    & 0.1 & \result{93.66}{3.72} & \result{\textbf{95.90}}{2.98} & \result{85.16}{4.35} & \result{\textbf{90.37}}{3.27} & \result{\textbf{0.93}}{0.04} & \result{0.90}{0.04} \\
    & 0.3 & \result{70.76}{8.51} & \result{\textbf{88.62}}{4.82} & \result{56.51}{5.47} & \result{\textbf{79.22}}{4.61} & \result{0.70}{0.06} & \result{\textbf{0.80}}{0.06} \\
    & 0.5 & \result{49.37}{9.51} & \result{\textbf{70.86}}{7.20} & \result{36.31}{4.74} & \result{\textbf{57.67}}{6.97} & \result{0.47}{0.07} & \result{\textbf{0.58}}{0.08} \\
    & 0.7 & \result{26.49}{8.29} & \result{\textbf{44.80}}{8.35} & \result{17.87}{3.94} & \result{\textbf{31.38}}{5.82} & \result{0.25}{0.05} & \result{\textbf{0.31}}{0.07} \\
    & 0.9 & \result{8.22}{5.63}  & \result{\textbf{16.41}}{7.96} & \result{4.39}{2.15}  & \result{\textbf{7.86}}{3.42}  & \result{\textbf{0.07}}{0.03} & \result{0.06}{0.03} \\ \hline \hline
\end{tabular}

\end{table*}

\begin{table*}[!t]
\centering
\caption{The \attack's effectiveness against different defenses in Qwen3-30B on AlpacaEval and iCliniq. 
Defense levels correspond to Gaussian noise and activation sparsification as defined in Table~\ref{sec3_tab2}.}
\label{sec3_tab_qwen30b}
\small
\begin{tabular}{c|l|c|cccc|cccc|cccc}
\hline \hline
\multirow{2}{*}{Dataset} & \multirow{2}{*}{M.} & \multirow{2}{*}{L.}
& \multicolumn{4}{c|}{Precision} 
& \multicolumn{4}{c|}{Recall} 
& \multicolumn{4}{c}{ROUGE-L} \\
& & & Gaussian & Element & Token & Ours
      & Gaussian & Element & Token & Ours
      & Gaussian & Element & Token & Ours \\ \hline

\multirow{5}{*}{\rotatebox{90}{AlpacaEval}}
& \multirow{5}{*}{\rotatebox{90}{Qwen3-30B}}
& 1 & 99.39 & 99.95 & 99.38 & 92.67
    & 98.98 & 98.99 & 98.62 & 85.93
    & 0.98  & 0.98  & 0.98  & 0.91 \\
& & 2 & 99.34 & 99.11 & 99.70 & 74.03
    & 98.90 & 98.35 & 97.88 & 67.64
    & 0.97  & 0.96  & 0.95  & 0.72 \\
& & 3 & 99.73 & 97.38 & 98.65 & 47.80
    & 97.59 & 95.85 & 96.44 & 40.97
    & 0.95  & 0.94  & 0.94  & 0.50 \\
& & 4 & 96.92 & 82.90 & 86.54 & 26.34
    & 92.36 & 76.28 & 79.52 & 20.55
    & 0.93  & 0.86  & 0.85  & 0.29 \\
& & 5 & 8.65  & 45.83 & 46.52 & 19.45
    & 6.63  & 37.23 & 38.60 & 13.97
    & 0.11  & 0.65  & 0.69  & 0.15 \\ \hline

\multirow{5}{*}{\rotatebox{90}{iCliniq}}
& \multirow{5}{*}{\rotatebox{90}{Qwen3-30B}}
& 1 & 99.21 & 99.94 & 99.17 & 95.41
    & 99.58 & 99.13 & 98.50 & 90.40
    & 0.99  & 0.98  & 0.99  & 0.92 \\
& & 2 & 99.35 & 99.46 & 99.59 & 75.02
    & 98.81 & 98.49 & 98.31 & 68.58
    & 0.98  & 0.96  & 0.97  & 0.75 \\
& & 3 & 99.54 & 97.61 & 98.45 & 54.18
    & 97.98 & 96.67 & 97.62 & 44.08
    & 0.97  & 0.94  & 0.95  & 0.54 \\
& & 4 & 98.27 & 89.25 & 93.60 & 30.63
    & 95.99 & 82.81 & 86.90 & 24.34
    & 0.94  & 0.88  & 0.89  & 0.32 \\
& & 5 & 8.97  & 53.68 & 54.89 & 21.76
    & 16.10 & 45.18 & 46.98 & 15.62
    & 0.10  & 0.67  & 0.72  & 0.18 \\ \hline \hline
\end{tabular}
\end{table*}

\begin{table*}[!ht]
\centering
\caption{Utility evaluation of Qwen3-0.6B. Bold marks the best among the three methods at Level 1 (i.e., L.1, weakest defense) within each dataset/metric, while underlining marks the best at Level 5 (i.e., L.5, strongest defense).}
\label{tab_utility_qwen3}
\small
\begin{tabular}{c|c|ccccc|ccccc|ccccc}
\hline\hline
\multirow{2}{*}{Dataset} & \multirow{2}{*}{Metric}
& \multicolumn{5}{c|}{GaussianNoise} 
& \multicolumn{5}{c|}{Sparsification} 
& \multicolumn{5}{c}{\defense} \\

& 
& L.1 & L.2 & L.3 & L.4 & L.5
& L.1 & L.2 & L.3 & L.4 & L.5
& L.1 & L.2 & L.3 & L.4 & L.5 \\ \hline

\multirow{6}{*}{\rotatebox{90}{AlpacaEval}}
& Accuracy \& Correctness
& 2.05 & 2.31 & 2.49 & 2.10 & 0.05
& 2.20 & 1.63 & 1.51 & 0.68 & 0.07
& \textbf{2.35} & 2.42 & 2.33 & 1.82 & \underline{1.84} \\

& Completeness \& Depth
& 2.01 & 1.83 & 1.86 & 1.66 & 0.10
& 2.01 & 1.50 & 1.67 & 0.14 & 0.13
& \textbf{2.20} & 1.67 & 1.48 & 1.26 & \underline{0.96} \\

& Relevance
& \textbf{3.85} & 3.91 & 3.69 & 3.17 & 0.14
& 3.72 & 3.46 & 1.48 & 1.46 & 1.00
& 3.73 & 3.64 & 3.46 & 3.34 & \underline{3.14} \\

& Clarity \& Organization
& \textbf{3.90} & 3.78 & 3.63 & 3.11 & 1.16
& 3.84 & 3.61 & 3.58 & 3.48 & 3.43
& 3.89 & 3.97 & 3.51 & 3.48 & \underline{3.45} \\

& Style \& Tone
& \textbf{3.99} & 3.85 & 3.65 & 3.17 & 1.25
& 3.88 & 3.64 & 3.54 & 3.46 & \underline{3.48}
& 3.92 & 3.85 & 3.82 & 3.51 & 2.27 \\

& Average
& 3.16 & 3.14 & 3.07 & 2.64 & 0.54
& 3.13 & 2.77 & 2.36 & 1.84 & 1.62
& \textbf{3.22} & 3.11 & 2.92 & 2.63 & \underline{2.34} \\

\hline
\multirow{6}{*}{\rotatebox{90}{iCliniq}}
& Accuracy \& Correctness
& \textbf{1.83} & 1.74 & 1.48 & 1.36 & 0.35
& 1.63 & 1.53 & 1.19 & 1.13 & 0.60
& 1.57 & 1.53 & 1.48 & 1.53 & \underline{1.32} \\

& Completeness \& Depth
& \textbf{1.64} & 1.69 & 1.57 & 1.39 & 0.28
& 1.28 & 1.18 & 1.10 & 1.06 & \underline{0.60}
& 1.18 & 0.90 & 0.85 & 0.76 & 0.46 \\

& Relevance
& 2.84 & 2.65 & 2.59 & 2.30 & 0.97
& 2.87 & 2.91 & 2.09 & 1.64 & 0.57
& \textbf{3.11} & 2.92 & 2.85 & 2.52 & \underline{1.92} \\

& Clarity \& Organization
& \textbf{3.47} & 2.88 & 2.70 & 2.88 & 1.49
& 2.94 & 2.86 & 2.52 & 1.99 & 0.54
& 3.20 & 3.13 & 3.03 & 2.42 & \underline{2.00} \\

& Style \& Tone
& \textbf{3.11} & 2.66 & 2.43 & 2.55 & 1.52
& 2.72 & 2.69 & 2.53 & 2.26 & 0.61
& 2.80 & 2.68 & 2.66 & 2.34 & \underline{1.89} \\

& Average
& \textbf{2.58} & 2.33 & 2.15 & 2.10 & 0.92
& 2.29 & 2.23 & 1.89 & 1.62 & 0.58
& 2.37 & 2.23 & 2.17 & 1.91 & \underline{1.52} \\

\hline\hline
\end{tabular}
\end{table*}

\begin{table*}[!h]
\centering  
\caption{Utility evaluation of Falcon3-1B. Bold marks the best among the three methods at Level 1 (i.e., L.1, weakest defense) within each dataset/metric, while underlining marks the best at Level 5 (i.e., L.5, strongest defense).}
\label{tab_utility_falcon}
\footnotesize
\begin{tabular}{c|c|ccccc|ccccc|ccccc}
\hline\hline
\multirow{2}{*}{Dataset} & \multirow{2}{*}{Metric}
& \multicolumn{5}{c|}{GaussianNoise}
& \multicolumn{5}{c|}{Sparsification}
& \multicolumn{5}{c}{\defense} \\

& 
& L.1 & L.2 & L.3 & L.4 & L.5
& L.1 & L.2 & L.3 & L.4 & L.5
& L.1 & L.2 & L.3 & L.4 & L.5 \\ \hline

\multirow{6}{*}{\rotatebox{90}{AlpacaEval}}
& Accuracy \& Correctness
& \textbf{2.09} & 2.03 & 1.56 & 1.55 & \underline{0.49}
& 2.02 & 1.38 & 0.47 & 0.45 & 0.58
& 1.99 & 1.46 & 1.00 & 1.08 & 0.46 \\

& Completeness \& Depth
& \textbf{2.01} & 2.05 & 1.53 & 1.42 & \underline{0.58}
& 2.00 & 1.53 & 0.45 & 0.50 & 0.51
& 1.94 & 1.48 & 0.98 & 0.47 & 0.52 \\

& Relevance
& 3.06 & 3.04 & 3.02 & 2.45 & 0.53
& 3.45 & 2.09 & 0.56 & 0.50 & \underline{0.56}
& \textbf{3.49} & 2.43 & 1.53 & 0.95 & 0.48 \\

& Clarity \& Organization
& \textbf{3.03} & 3.02 & 2.52 & 2.45 & \underline{0.55}
& 1.97 & 1.99 & 1.46 & 0.47 & 0.50
& 2.46 & 3.04 & 1.46 & 0.96 & 0.44 \\

& Style \& Tone
& \textbf{2.15} & 2.11 & 2.12 & 2.02 & \underline{0.49}
& 2.05 & 1.51 & 1.51 & 0.50 & 0.49
& 2.04 & 2.98 & 0.99 & 0.97 & 0.49 \\

& Average
& \textbf{2.47} & 2.45 & 2.15 & 1.98 & \underline{0.53}
& 2.30 & 1.70 & 0.89 & 0.49 & 0.53
& 2.38 & 2.28 & 1.19 & 0.89 & 0.48 \\

\hline
\multirow{6}{*}{\rotatebox{90}{iCliniq}}
& Accuracy \& Correctness
& 1.21 & 1.07 & 0.94 & 0.91 & \underline{0.71}
& 1.46 & 1.07 & 1.04 & 0.80 & 0.19
& \textbf{1.65} & 1.50 & 1.50 & 0.98 & 0.56 \\

& Completeness \& Depth
& \textbf{2.09} & 2.15 & 2.03 & 2.01 & \underline{0.75}
& 1.49 & 1.25 & 1.00 & 0.71 & 0.24
& 2.02 & 1.83 & 1.48 & 1.06 & 0.57 \\

& Relevance
& \textbf{2.97} & 2.48 & 2.44 & 2.19 & \underline{1.24}
& 2.95 & 2.74 & 1.52 & 1.72 & 0.21
& 2.94 & 1.96 & 1.96 & 1.06 & 1.09 \\

& Clarity \& Organization
& 2.45 & 1.94 & 1.75 & 0.96 & \underline{1.17}
& 2.99 & 1.96 & 1.03 & 0.76 & 0.16
& \textbf{3.00} & 2.02 & 1.54 & 1.02 & 0.97 \\

& Style \& Tone
& \textbf{2.71} & 1.31 & 1.27 & 1.02 & \underline{0.97}
& 2.44 & 1.02 & 0.75 & 0.51 & 0.27
& 2.48 & 1.49 & 0.94 & 0.96 & 0.47 \\

& Average
& 2.28 & 1.79 & 1.69 & 1.42 & \underline{0.97}
& 2.27 & 1.61 & 1.07 & 0.90 & 0.22
& \textbf{2.42} & 1.76 & 1.48 & 1.02 & 0.73 \\

\hline\hline
\end{tabular}

\end{table*}

\begin{table*}[!h]
\centering
\caption{Utility evaluation of Qwen3-30B under sparsification and \defense.
Bold marks the best among the three methods at Level 1 (i.e., L.1, weakest defense) within each dataset/metric, while underlining marks the best at Level 5 (i.e., L.5, strongest defense).}
\label{tab_utility_qwen30b}
\small
\begin{tabular}{c|c|ccccc|ccccc}
\hline\hline
\multirow{2}{*}{Dataset} & \multirow{2}{*}{Metric}
& \multicolumn{5}{c|}{Sparsification}
& \multicolumn{5}{c}{\defense} \\

& 
& L.1 & L.2 & L.3 & L.4 & L.5
& L.1 & L.2 & L.3 & L.4 & L.5 \\ \hline

\multirow{3}{*}{AlpacaEval}
& Accuracy
& 3.67 & 2.89 & 2.00 & 1.74 & 1.42
& \textbf{3.85} & 3.46 & 2.82 & 2.56 & \underline{2.29} \\

& Completeness
& 3.35 & 2.57 & 1.91 & 1.81 & 1.52
& \textbf{3.64} & 3.19 & 2.64 & 2.36 & \underline{2.06} \\

& Average
& 3.51 & 2.73 & 1.95 & 1.78 & 1.47
& \textbf{3.75} & 3.33 & 2.73 & 2.46 & \underline{2.17} \\

\hline
\multirow{3}{*}{iCliniq}
& Accuracy
& 2.43 & 2.82 & 1.86 & 1.61 & 1.30
& \textbf{3.34} & 2.87 & 2.68 & 2.38 & \underline{2.06} \\

& Completeness
& 2.72 & 2.23 & 1.77 & 1.62 & 1.33
& \textbf{3.18} & 2.67 & 2.49 & 2.17 & \underline{1.92} \\

& Average
& 2.57 & 2.53 & 1.81 & 1.61 & 1.32
& \textbf{3.26} & 2.77 & 2.58 & 2.28 & \underline{1.99} \\

\hline\hline
\end{tabular}

\end{table*}

\begin{table}[!h]
\centering
\caption{\revise{Effectiveness of \defense against prior prompt inversion attacks on Qwen3-8B and AlpacaEval. We report the arithmetic mean of token-level Precision and Recall (\%). Lower is better from the defender's perspective.}}
\label{tab:defense_prior_attacks}
\small
\begin{tabular}{c|c|c}
\hline \hline
\textbf{Attack} &  \textbf{Sparsification} & \textbf{\defense} \\
\hline
A1~\cite{DBLP:conf/ccs/Luo0X25} & 18.57 & \textbf{1.91} \\
TBS~\cite{DBLP:conf/uss/Dong00C0Z25}  & 81.97 & \textbf{44.92} \\
PIA~\cite{DBLP:conf/sp/0004ZWXYLZ25}  & 95.57 & \textbf{46.11} \\
\hline \hline
\end{tabular}
\end{table}

\begin{table}[!h]
\centering
\small
\caption{Client-side processing time (s/1k tokens) for defenses.}
\label{tab_defense_overhead}
\begin{tabular}{c|c|c|c}
\hline \hline
Defense & GaussianNoise & Sparsification & \defense \\ \hline
Avg. Time & \result{1.0303}{0.1755} & \result{0.9984}{0.1539} & \result{4.3703}{0.2987} \\ 
\hline \hline
\end{tabular}
\end{table}

\begin{table}[!h]
\centering
\small
\caption{Impact of approximation steps on \defense performance. We use a sparsification ratio of 0.5.}
\label{tab:approx_steps}
\begin{tabular}{cccc}
\hline \hline
{Approx. Steps} & {ROUGE-L} & {Utility} & {Time (s/1k tokens)} \\
\hline
1 & 0.52 & 2.30 & 3.3228 \\
2 & 0.45 & 2.36 & 4.3703 \\
3 & 0.43 & 2.36 & 5.3510 \\
\hline \hline
\end{tabular}
\end{table}

\noindent
\textbf{Datasets and models.}
We conduct evaluations on AlpacaEval and iCliniq.
We employ Qwen3-0.6B, Falcon3-1B, and Qwen3-30B.

\noindent
\textbf{Hyperparameters.}
The sparsity ratios of \defense are over \{0.1, 0.3, 0.5, 0.7, 0.9\}.

\subsection{Evaluation Results}
\label{sec5_res}

\textbf{Privacy protection.}
Table \ref{tab_defense_our} and Table \ref{sec3_tab_qwen30b} show the performance of \defense against \attack on both AlpacaEval and iCliniq.
\defense consistently and substantially reduces reconstruction quality, with stronger sparsity (higher ratio) giving stronger protection.
For example, on Qwen3‑0.6B with AlpacaEval at sparsity 0.7, \attack’s Precision and Recall drop below 30\%, compared to over 60\% when using traditional sparsification at the same rate (Table~\ref{sec3_tab2}).
ROUGE-L also decreases sharply, indicating that the adversary can no longer recover the semantic content of the prompt.
We further observe that, for a fixed defense ratio, Qwen3-0.6B generally provides stronger privacy than Falcon3-1B.
For the same sparsity, Qwen3‑0.6B generally provides better privacy than Falcon3‑1B (e.g., Precision 17.95\% vs. 32.60\% at sparsity 0.7 on AlpacaEval), while also offering higher utility, indicating a more favorable privacy–utility trade-off. 
Finally, we see that iCliniq is generally harder to defend, because domain-specific and highly distinctive medical tokens are easier to reconstruct than generic open-domain vocabulary.

\textbf{Utility evaluation.}
Tables~\ref{tab_utility_qwen3}, ~\ref{tab_utility_falcon}, and \ref{tab_utility_qwen30b} show that \defense preserves utility well at low-to-moderate sparsity levels.
Since the responses generated by Qwen3-30B consistently achieve very high scores (above 3.5) in Relevance, Clarity, and Style, we omit these dimensions to avoid confusion and report only the Accuracy and Completeness scores. 
On Qwen3‑0.6B, for instance, \defense at Level 1 and Level 2 (roughly corresponding to sparsity \(\le 0.5\)) achieves average LLMs-as-judge scores of 3.53 and 3.16 on AlpacaEval, outperforming traditional sparsification at the same budgets.
Moreover, as expected, all defenses eventually collapse utility at the highest perturbation strengths, indicating that extremely aggressive perturbations are fundamentally incompatible with maintaining model usefulness.
A per-criterion view reveals an asymmetry in how utility degrades: Accuracy\&Completeness is more sensitive to perturbations than formal aspects like Clarity and Style.
For example, in AlpacaEval, with \defense at Level 3 on Qwen3-0.6B, the Accuracy and Completeness scores are around 2.33 and 1.48, while Clarity and Style remain high, at approximately 3.51 and 3.82, respectively.
Dataset and architectural differences also play a role.
For a fixed defense ratio, Qwen3-0.6B tends to deliver better utility scores than Falcon3-1B.
Medical queries from iCliniq also show earlier degradation in Relevance than open-domain AlpacaEval queries at the same nominal budget.
This is likely because the specialized terminology in iCliniq concentrates meaning in fewer high-impact tokens, which, when perturbed, can disproportionately distort the perceived correctness and relevance.
In summary, sparsity ratios around 0.5 are favorable, providing considerable privacy gains while keeping average utility in an acceptable range.

\textbf{Case Study.}
Figure~\ref{fig_case} illustrates \defense’s behavior qualitatively.
Under a moderate defense setting (level 3; sparsity ratio 0.5), \defense enables the reconstructed prompt to diverge substantially from the original prompt.
While a few tokens overlap, the recovered prompt fails to capture the core intent and exhibits pronounced semantic drift.
Moreover, the model’s response under the same perturbation remains fluent, coherent, and on-topic.
This example demonstrates that \defense can substantially reduce semantic leakage from activations while preserving downstream usefulness.

\revise{\textbf{Effectiveness against prior attacks.}
We also evaluate \defense against A1~\cite{DBLP:conf/ccs/Luo0X25}, TBS~\cite{DBLP:conf/uss/Dong00C0Z25}, and PIA~\cite{DBLP:conf/sp/0004ZWXYLZ25}.
We use Qwen3-8B, AlpacaEval, and $Q_1=5$, and fix the sparsity ratio to 0.5.
For each attack, we report the arithmetic mean of token-level Precision and Recall.
Table~\ref{tab:defense_prior_attacks} shows that \defense consistently reduces the effectiveness of all evaluated attacks, often by a large margin.
Compared with naive sparsification, \defense yields substantially stronger protection across all baselines.
For instance, against A1, the reconstruction score drops from 18.57 under standard sparsification to 1.91 under \defense.
Similarly, against TBS and PIA, the scores decrease from 81.97 to 44.92 and from 95.57 to 46.11, respectively.}

\textbf{Defense overhead.}
To quantify the computational cost of \defense, we measure client-side processing time per 1k tokens over 1,000 trials on an NVIDIA RTX 4090. 
The average processing times (in seconds) and their corresponding standard deviations are presented in Table \ref{tab_defense_overhead}.
Both GaussianNoise and Sparsification incur a relatively low overhead, with \defense having a higher average processing time.
This is primarily due to the additional gradient computations involved in \defense.
In many privacy-sensitive scenarios (e.g., healthcare, legal advice), this overhead may be acceptable given the strong privacy gains.
For latency-critical settings, a natural optimization is to focus on semantic-critical tokens, such as names or rare domain-specific entities, rather than processing all tokens.
Please refer to Appendix \ref{appendix_overhead} for more discussion where the time cost of \defense can be reduced to approximately 1.16 seconds.

\textbf{Ablation on approximation granularity.}
Table~\ref{tab:approx_steps} examines how the number of approximation steps in the path integral affects privacy, utility, and overhead. 
We evaluate privacy, utility (LLMs-as-judge average score), and client-side processing cost (seconds per 1k tokens, measured on the same hardware as Table \ref{tab_defense_overhead}).
Moving from 1 to 2 steps substantially improves privacy with negligible impact on utility and a moderate increase in cost.
Adding a third step yields only marginal additional privacy gains but incurs further overhead.
These numbers validate that additional quadrature points improve the approximation, but with rapidly diminishing returns once the dominant directions are captured.

\section{Conclusion and Future Work}
\label{sec7}
In this work, we conducted the first comprehensive investigation into prompt leakage within the split inference paradigm for LLM.s
We developed \attack, an inversion attack that an honest-but-curious server can use to reconstruct a client's sensitive input with remarkably high fidelity.
This attack proved resilient even against common defenses like activation sparsification and random noise injection, revealing that these naive countermeasures are largely ineffective.
We also developed PAF to understand this vulnerability and revealed that privacy vulnerability varies significantly across the model architecture, with layers like activation functions being highly susceptible to leakage.
Based on these insights, we designed and evaluated \defense, which injects adversarially calibrated noise into intermediate activations to maximize reconstruction error while preserving model utility with acceptable overhead.
The extensive evaluations demonstrated that \defense significantly outperforms traditional defenses, offering a more favorable privacy-utility trade-off.

As a path forward, we believe two promising avenues warrant further investigation.
One direction is to design privacy-aware compression techniques that can embed defense properties directly into the model's structure.
Additionally, future work could explore the possibility of designing privacy-sensitive activation functions that maintain model non-linearity while offering greater resistance to inversion attacks.

\bibliographystyle{ACM-Reference-Format}
\bibliography{sample-base}

\appendix

\section{Proof of Theorem \ref{thm_linear}}
\label{proof1}

\begin{proof}

We can re-express the $\delta = (\hat{\mathbf{z}} - \mathbf{z}) \mathbf{J}$ by defining $\Delta = \hat{\mathbf{z}} - \mathbf{z}$, which gives us $\delta = \Delta \mathbf{J}$.
Finding a solution for $\Delta$ can be reformulated as the following optimization problem:
$\min_{\Delta} || \delta - \Delta \mathbf{J} ||_2^2.$

To find the optimal solution, we take the gradient of the objective function with respect to $\Delta$ and set it to zero:
$$\frac{\partial || \delta - \Delta \mathbf{J} ||_2^2}{\partial \Delta}  = - 2 (\delta - \Delta \mathbf{J})\mathbf{J}^\top = 0.$$
Simplifying the above expression yields:
$\Delta \mathbf{J}\mathbf{J}^T = \delta \mathbf{J}^T.$
The solvability of this normal equation depends on the invertibility of $\mathbf{J}\mathbf{J}^T$, specifically whether it is of full rank.
Let us now consider the different cases based on the rank of $\mathbf{J}\mathbf{J}^\top$.

\textbf{Case 1: $\mathbf{J}\mathbf{J}^T$ is of full rank.}
If $\mathbf{J}\mathbf{J}^T$ is of full rank, its inverse $(\mathbf{J}\mathbf{J}^T)^{-1}$ exists.
Consequently, $\Delta$ can be directly solved as:
$\Delta = \delta \mathbf{J}^T(\mathbf{J}\mathbf{J}^T)^{-1}.$

Let $\{\sigma_i\}$ and $\{\mu_i\}$ denote eigenvalues and corresponding eigenvectors of $\mathbf{J}\mathbf{J}^\top$ ($\mu_i$ is row vector to conform to this paper's notation convention).
Then $(\mathbf{J}\mathbf{J}^\top)^{-1}$ admits the spectral decomposition as follows:
$(\mathbf{J}\mathbf{J}^\top)^{-1} = \sum_i \frac{1}{\sigma_i} \mu_i^\top \mu_i.$
Substituting this into the error expression gives:
\begin{equation}
\begin{split}
\nonumber
||\Delta||_2^2 &= \left|\left| \delta \mathbf{J}^\top ( \sum_i \frac{\mu_i^\top \mu_i}{\sigma_i} ) \right|\right|_2^2 \\
&= \sum_i \left( \frac{\delta \mathbf{J}^\top \mu_i^\top}{\sqrt{\sigma_i}} \right)^2 = \sum_i \left( \frac{||\delta||_2 ||\mathbf{J}^\top||_2^2 \cos \theta_i }{\sqrt{\sigma_i }} \right)^2,
\end{split}
\end{equation}
where we use orthogonality of eigenvectors $\mu_i$, and $\theta_i$ denotes the angle between $X$ and $\mu_i$.

\textbf{Case 2: $\mathbf{J}\mathbf{J}^T$ is not of full rank.}
When $\mathbf{J}\mathbf{J}^T$ is not of full rank, its inverse does not exist.
To this end, we employ regularization by adding an identity matrix $\mathbf{I}$ to $\mathbf{J}\mathbf{J}^T$ in the normal equation as follows:
$$\Delta (\mathbf{J}\mathbf{J}^\top + \mathbf{I}) = \delta \mathbf{J}^T.$$
The solution for this regularized normal equation is then given by:
$$\Delta = \delta \mathbf{J}^T (\mathbf{J}\mathbf{J}^\top + \mathbf{I})^{-1}.$$
$(\mathbf{J}\mathbf{J}^\top + \mathbf{I})$ and $\mathbf{J}\mathbf{J}^\top$ share the same eigenvectors, $\{\mu_i\}$, but their eigenvalues are shifted by 1. The eigenvalues of $(\mathbf{J}\mathbf{J}^\top + \mathbf{I})$ are $\{\sigma_i+1\}$.
Applying the same spectral decomposition logic as in Case 1, we derive:
\begin{equation}
\begin{split}
    ||\Delta||_2^2 &= \left|\left| \delta \mathbf{J}^\top ( \sum_i \frac{\mu_i^\top \mu_i}{\sigma_i+1} ) \right|\right|_2^2 \\
    &= \sum_i \left( \frac{\delta \mathbf{J}^\top \mu_i^\top}{\sqrt{\sigma_i+1}} \right)^2 = \sum_i \left( \frac{||\delta \mathbf{J}^\top||_2 \cos \theta_i }{\sqrt{\sigma_i+1 }} \right)^2.
\end{split}
\end{equation}

\textbf{Final expression.}
The regularized solution in Case 2 guarantees a larger upper bound on $||\Delta||_2^2$ compared to the solution in Case 1, as the denominator $\sqrt{\sigma_i + 1}$ becomes smaller than $\sqrt{\sigma_i}$. To unify both cases, we adopt the regularized form as a conservative estimate, leading to the final expression:
\begin{equation}
\begin{split}
    ||\Delta||_2^2 \leq  \sum_i \left( \frac{||\delta \mathbf{J}^\top||_2 \cos \theta_i }{\sqrt{\sigma_i+1 }} \right)^2.
\end{split}
\end{equation}
\end{proof}
\begin{figure}[!t]
    \centering
    \subfloat[Expected PAF]{
        \includegraphics[width=0.8\linewidth]{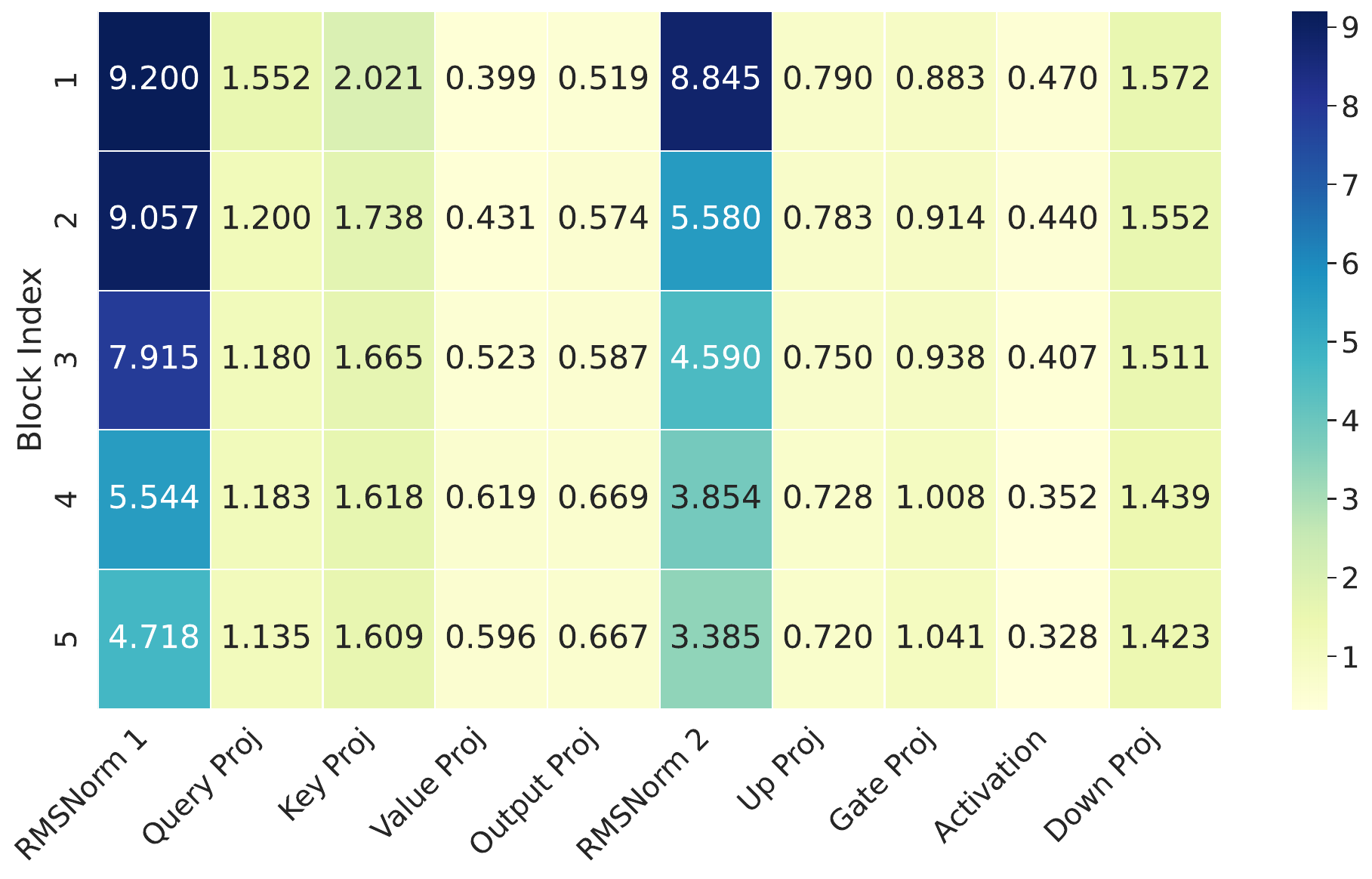}
    }
    \vfill
    \subfloat[Max-PAF]{
        \includegraphics[width=0.8\linewidth]{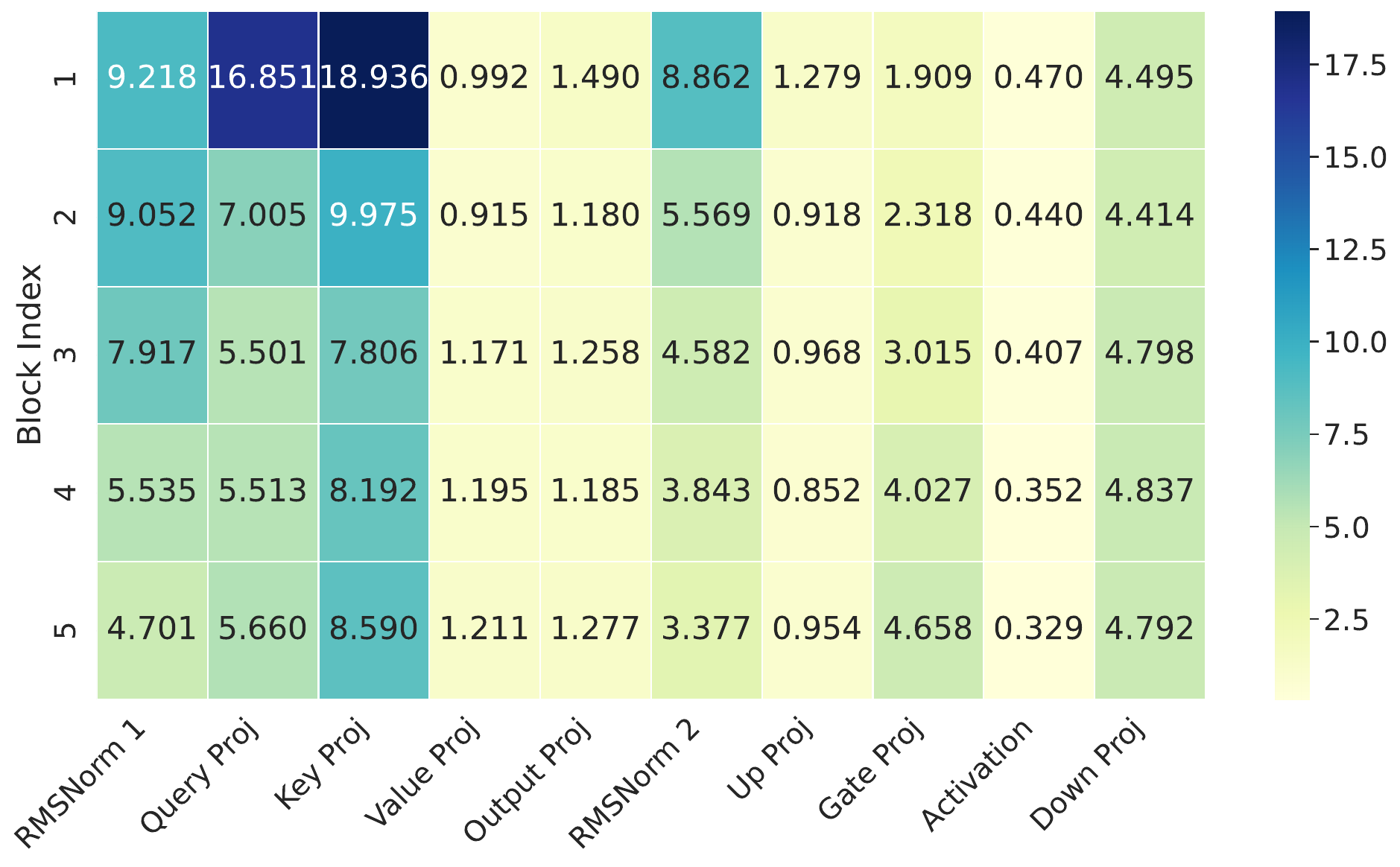}
    }
    \vspace{-4mm}
    \caption{Comparison of different layers' sensitivity in Llama-3.2-1B.}
    \label{fig_jac_lamma}
\end{figure}

\section{PAF Evaluation in Llama-3.2-1B and  SmolLM2-1.7B}
\label{appendix_paf_llama_smo}

Figures \ref{fig_jac_lamma} and \ref{fig_jac_smollm} present PAF values for different layers of Llama-3.2-1B and SmolLM2-1.7B, respectively.

\begin{figure}[!t]
    \centering
    \subfloat[Expected PAF]{
        \includegraphics[width=0.8\linewidth]{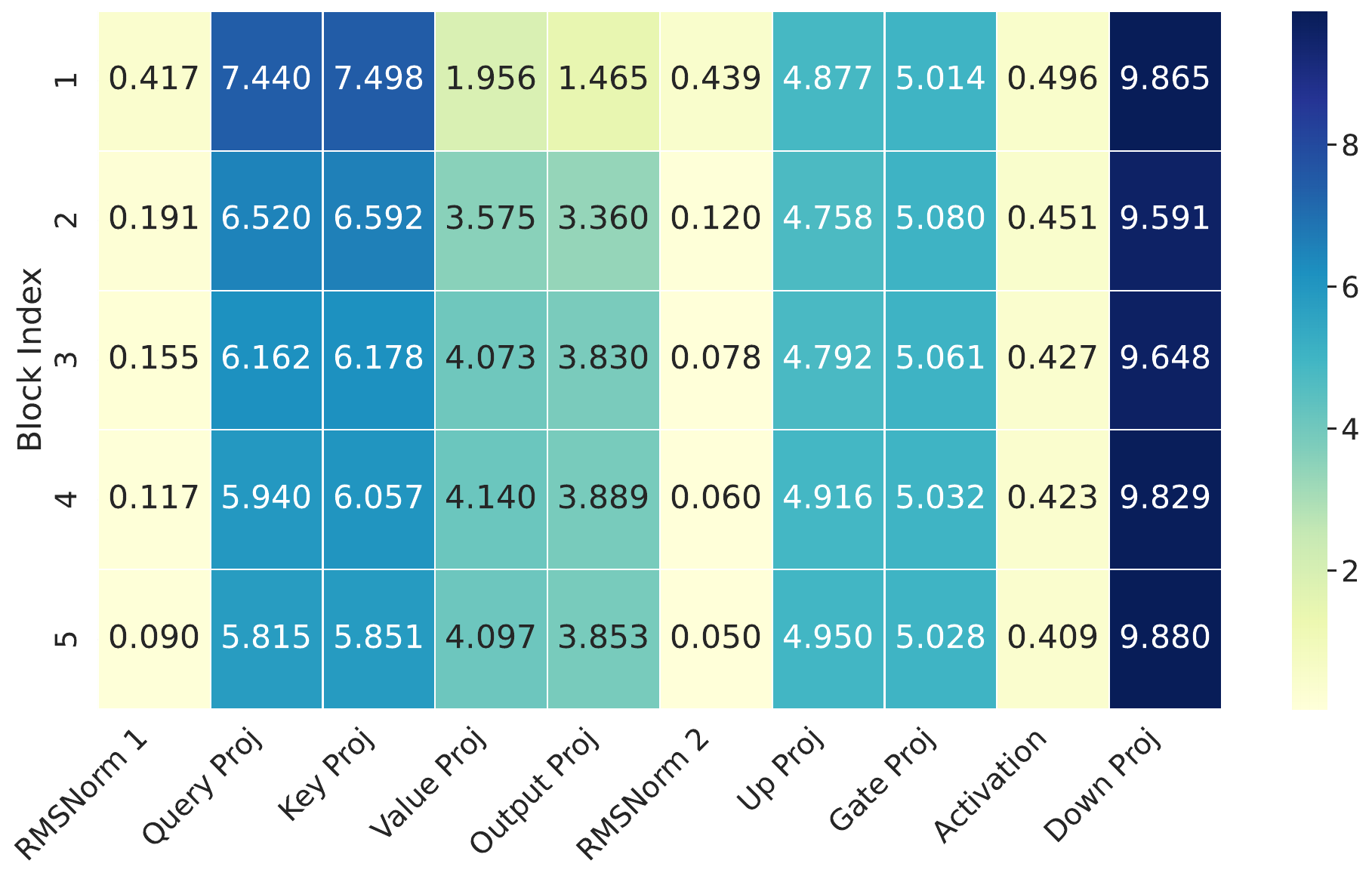}
    }
    \vfill
    \subfloat[Max-PAF]{
        \includegraphics[width=0.8\linewidth]{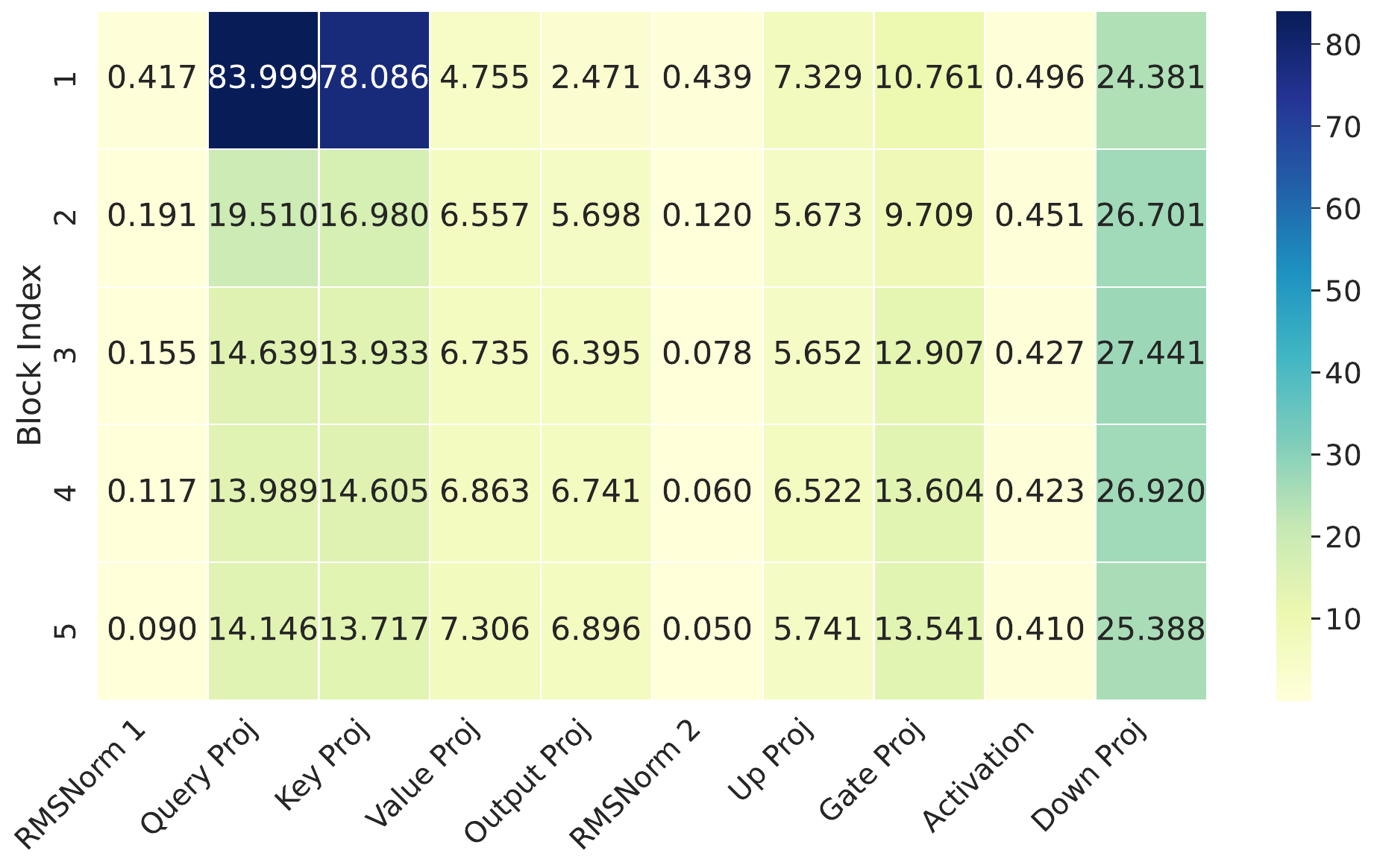}
    }
    \vspace{-4mm}
    \caption{Comparison of different layers' sensitivity in SmolLM2-1.7B.}
    \label{fig_jac_smollm}
\end{figure}

\section{Proof of Theorem \ref{thm:bound}}
\label{appendix_theorem_2_proof}

\begin{proof}
We know
$
\delta = F_C(\mathbf{\hat{z}}) - F_C(\mathbf{z}) = \int_0^1 \nabla F_C(\mathbf{z} + t\Delta) \,\Delta \; dt.
$
Taking the \(q\)-norm and using the consistency of induced norms,
\[
\|\delta\|_q \le \int_0^1 \|\nabla F_C(\mathbf{z} + t\Delta)\|_q \; dt \cdot \|\Delta\|_q \le C \|\Delta\|_q.
\]
Since \(\|\delta\|_q \le \mu\), we obtain \(\mu \le C \|\Delta\|_q\), i.e., \(\|\Delta\|_q \ge \mu/C\). This proves the first part.

For the second part, the first part gives \(\|\Delta\|_q \ge \mu/C > d_{\min}/2\). Let \(\mathbf{e}_{\min}\) be the token achieving \(d_{\min}\). By the triangle inequality,
\[
\|\mathbf{\hat{z}} - \mathbf{e}_{\min}\|_q \ge \bigl| \|\mathbf{z} - \mathbf{e}_{\min}\|_q - \|\mathbf{\hat{z}} - \mathbf{z}\|_q \bigr| = d_{\min} - \|\Delta\|_q,
\]
provided \(\|\Delta\|_q \le d_{\min}\). If \(\|\Delta\|_q > d_{\min}\), then trivially \(\mathbf{\hat{z}}\) is farther from \(\mathbf{z}\) than \(\mathbf{e}_{\min}\), so the conclusion holds. Otherwise (\(\|\Delta\|_q \le d_{\min}\)), we have \(d_{\min} - \|\Delta\|_q < \|\Delta\|_q\) because \(\|\Delta\|_q > d_{\min}/2\). Thus
\[
\|\mathbf{\hat{z}} - \mathbf{e}_{\min}\|_q \le d_{\min} - \|\Delta\|_q < \|\Delta\|_q = \|\mathbf{\hat{z}} - \mathbf{z}\|_q.
\]
Hence \(\mathbf{\hat{z}}\) is closer to \(\mathbf{e}_{\min}\) than to \(\mathbf{z}\), so the true token is not the nearest neighbor.
\end{proof}

\section{Overhead Optimization}
\label{appendix_overhead}

\begin{table}[h]
\centering
\small
\caption{Processing overhead for domain-specific terms and privacy-sensitive terms on iCliniq (per 1k tokens).}
\begin{tabular}{c|c|c}
\hline \hline
                     & \textbf{Domain-Specific} & \textbf{Privacy-Sensitive} \\ \hline
\textbf{Ratio ($r$)} & \result{19.22\%}{6.69\%} & \result{4.22\%}{4.13\%} \\
\textbf{Expected Time} & 1.58T  &  1.13T \\
\textbf{Actual Time} & \result{1.64s}{0.28s} &  \result{1.16s}{0.21s} \\
\hline \hline
\end{tabular}
\label{tab:overhead_optimization}
\end{table}

To mitigate the computational overhead of \defense, we can apply the gradient-based optimization exclusively to a subset of protected tokens, such as privacy-sensitive entities or domain-specific terms.
Let $r \in [0, 1]$ denote the protected ratio, representing the proportion of tokens subject to \defense optimization.
Let $T$ represent the time required for a standard forward pass.
Since the overhead of baseline defenses (e.g., Gaussian noise or sparsification) is negligible, their total processing time remains approximately $T$.
The expected execution time for \defense is modeled as $T + 3rT$, where the term $3rT$ accounts for the additional gradient computation and backpropagation costs associated with the optimized tokens.

Table \ref{tab:overhead_optimization} reports the processing overhead analysis on iCliniq.
We employ GPT-5 to identify domain-specific and privacy-sensitive tokens within user queries and calculate their average occurrence ratios ($r$).
We compare the theoretical expected time against the actual measured execution time, utilizing the same experimental setup described in Table \ref{tab_defense_overhead}.
The results demonstrate that, since domain-specific and privacy-sensitive terms constitute only a small fraction of the input, the actual computational overhead of \defense remains low in practice.

\end{document}